\documentclass[lettersize,journal]{IEEEtran}
\usepackage{amsmath,amsfonts}
\usepackage{booktabs}
\usepackage{cite}
\usepackage{array}
\usepackage{textcomp}
\usepackage{amsthm}
\usepackage[english]{babel}
\newtheorem{theorem}{Theorem}
\newtheorem{lemma}{Lemma}

\newtheorem{assumption}{Assumption}
\newtheorem{proposition}{Proposition}
\usepackage{stfloats}
\usepackage{graphicx}
\usepackage{xcolor}
\definecolor{my1}{RGB}{188, 100, 100}
\definecolor{my2}{RGB}{188, 188, 100}
\usepackage[caption=false,font=footnotesize]{subfig}

\usepackage{url}
\usepackage{verbatim}
\usepackage{graphicx}
\usepackage{amssymb}
\usepackage{algorithm2e}
\usepackage{amsmath}
\allowdisplaybreaks
\RestyleAlgo{ruled}
\def\BibTeX{{\rm B\kern-.05em{\sc i\kern-.025em b}\kern-.08em
    T\kern-.1667em\lower.7ex\hbox{E}\kern-.125emX}}
\usepackage{balance}
\usepackage{xcolor}
\usepackage{algpseudocode}
\usepackage{ifthen}
\usepackage{xcolor} % For defining custom colors
\usepackage{soul}   % For highlighting
 
\usepackage{multirow} % Add this line in your preamble
\definecolor{lightblue}{RGB}{173,216,230} % Define custom color
\sethlcolor{lightblue}                    % Set highlight color

\begin{document}
\title{Biased Compression in Gradient Coding for Distributed Learning}
\author{Chengxi Li,~\IEEEmembership{Member,~IEEE,} Ming Xiao,~\IEEEmembership{Senior Member,~IEEE,} and Mikael Skoglund,~\IEEEmembership{Fellow, IEEE}
\thanks{C. Li, M. Xiao and M. Skoglund are with the Division of Information Science and Engineering, School of Electrical Engineering and Computer Science, KTH Royal Institute of Technology, 10044 Stockholm, Sweden. (e-mail: chengxli@kth.se; mingx@kth.se; skoglund@kth.se). Corresponding author: Chengxi Li.}
}

\markboth{Journal of \LaTeX\ Class Files}%
{}

\maketitle

\begin{abstract}
Communication bottlenecks and the presence of stragglers pose significant challenges in distributed learning (DL). To deal with these challenges, recent advances leverage unbiased compression functions and gradient coding. However, the significant benefits of biased compression remain largely unexplored. To close this gap, we propose \textbf{Co}mpressed Gradient \textbf{Co}ding with \textbf{E}rror \textbf{F}eedback (COCO-EF), a novel DL method that combines gradient coding with biased compression to mitigate straggler effects and reduce communication costs. In each iteration, non-straggler devices encode local gradients from redundantly allocated training data, incorporate prior compression errors, and compress the results using biased compression functions before transmission. The server aggregates these compressed messages from the non-stragglers to approximate the global gradient for model updates. We provide rigorous theoretical convergence guarantees for COCO-EF and validate its superior learning performance over baseline methods through empirical evaluations. As far as we know, we are among the first to rigorously demonstrate that biased compression has substantial benefits in DL, when gradient coding is employed to cope with stragglers. 
\end{abstract}

\section{Introduction}
\label{introduction} 
Distributed learning (DL) is emerging as an important paradigm for training machine learning models in a distributed fashion, which increases computation efficiency by using computational resources on edge devices simultaneously \cite{verbraeken2020survey,wang2020convergence,makkuvalaser2024}. Under the DL paradigm, multiple devices coordinate with a central server. Prior to training, the training data are divided into subsets and allocated to the devices so that each device holds one or a few of the subsets \cite{jahani2021optimal}. Afterward, the training is implemented over multiple iterations. In each training iteration, the server transmits the current global model to the devices. With this global model, the devices compute local gradients with the local subsets and transmit them to the server. After aggregating the gradients received from the devices, the server updates the global model \cite{mcmahan2017communication,cao2023communication}. Despite the benefits, there are two practical constraints encountered in DL problems, which are introduced as follows. 

The first one is the \textbf{communication bottleneck} caused by the transmission of a large number of high-dimensional and real-valued vectors from the devices to the server, given limited communication resources \cite{gorbunov2020unified,tang2020communication,qian2022distributed,li2024decentralized,wang2018atomo}. Existing works have aimed to address the communication bottleneck by reducing the communication overhead in each iteration, which is attained by using various compression functions to compress the gradients into some specific vectors that can be represented by fewer bits.
% For example, some propose to quantize the local gradients before transmission to the server, which means fewer bits are required to send each element in the gradients. Others impose sparsity on the gradient vectors before communicating to the server, where only a small fraction of elements are non-zero.
These compression functions can be classified into two categories: unbiased compression functions and biased compression functions, depending on whether the output of the compression function is an unbiased estimate or a biased estimate of the vector before compression. For example, stochastic quantization \cite{alistarh2017qsgd,safaryan2021stochastic} and amplified rand-$K$ sparsification \cite{wangni2018gradient} belong to unbiased compression functions, while the sign-bit quantization \cite{bernstein2018signsgd} and top‑$K$ sparsification \cite{alistarh2018convergence,stich2018sparsified,sahu2021rethinking} belong to biased compression functions. Compared to biased compression functions, unbiased compression functions are more extensively studied and understood, and they guarantee convergence to the stationary point in many DL tasks \cite{alistarh2017qsgd,wangni2018gradient,li2020unified,safaryan2021stochastic,condat2024locodl,he2024unbiased}.  However, biased compression has certain advantages over unbiased compression when both are evaluated in terms of their average capacity to retain information in the gradients \cite{beznosikov2023biased}. This is because smaller approximation error is introduced during the compression process when biased compression is adopted. Furthermore, another drawback of unbiased compression in DL is that, when stochastic noise contaminates the gradients, unbiased compression suffers from a linear deterioration in convergence performance\footnote{{Note that this linear deterioration in convergence performance still occurs even when unbiased compression is combined with gradient-difference compression to compensate for the compression error \cite{stich2020communication}.}} \cite{stich2020communication}.
Consequently, biased compression functions have been adopted in practice to achieve superior empirical performance compared to unbiased compression functions, with a remedy known as the error feedback mechanism used to ensure convergence \cite{li2023analysis,horvath2020better}. The error feedback mechanism was originally used in \cite{seide20141}, where 
it was empirically shown that when training deep neural networks, the loss of training accuracy caused by quantization error can be compensated by using error feedback. In \cite{karimireddy2019error}, error feedback was analyzed for one-device case and later extended to a distributed version with multiple devices in \cite{beznosikov2023biased}. With this mechanism, in each iteration, each device transmits the compressed local gradient to the server and stores the compression error locally, which can be used in the next iteration to modify the local gradient in order to correct the bias in compression and guarantee the correct direction of model update \cite{beznosikov2023biased}.

Second, in practice, significant delays may occur on some devices, known as \textbf{stragglers}, when computing and communicating to the server \cite{tandon2017gradient,buyukates2020timely}. This issue becomes obvious when a large number of cheap devices are deployed to reduce the overall system cost \cite{hard2024learning}. In this case, it is more desirable to wait only for messages from the non-stragglers, the devices that respond promptly, to ensure efficient training. As a result, in each iteration, only the non-stragglers participate, which degrades the learning performance due to missing information from the stragglers. To deal with the negative impact of the stragglers in DL, the gradient coding techniques have been developed. For example, in \cite{tandon2017gradient}, an exact gradient coding strategy is proposed, wherein the training data are divided into subsets, replicated, and allocated redundantly to devices in a carefully planned way. In each training iteration, each non-straggler device transmits a coded vector to the server, which is a single linear combination of its local gradients computed from the local subsets. By carefully designing the encoding strategies at the devices and the decoding strategies at the server, the true global gradient can be recovered exactly without knowing which devices are stragglers. To enhance flexibility in coping with variability in the number of stragglers over iterations, stochastic gradient coding is proposed in \cite{bitar2020stochastic}. In this approach, the server allocates the training subsets to the devices with a low level of redundancy according to a pairwise balanced scheme. The devices compute local gradients on their subsets and send a linear combination of these gradients. The server aggregates messages from the non-stragglers to approximately recover the global gradient and updates the model. Although these methods can effectively mitigate the negative impact of stragglers, they suffer from a significant communication bottleneck. This is because the non-straggler devices all transmit high-dimensional and real-valued dense vectors to the server in each iteration, which induces a heavy communication overhead \cite{chengxi20241-BitGC}. 

There is very little work that aims to address both the communication bottleneck and the issue of stragglers simultaneously. One recent effort tackles this challenge. In \cite{chengxi20241-BitGC}, a DL method based on 1-bit gradient coding is proposed, where locally computed gradients are encoded and compressed into 1-bit vectors via unbiased stochastic quantization. These vectors are then transmitted from the non-straggler devices to the server. Based on the gradient coding technique under data allocation redundancy, the straggler effects can be mitigated effectively in \cite{chengxi20241-BitGC}. Meanwhile, by transmitting 1-bit vectors instead of the real-valued ones, the communication overhead can be reduced compared to the prior works. However, the potential advantages of biased compression functions are not exploited in \cite{chengxi20241-BitGC}, which could further enhance learning performance under the constraints of communication and stragglers. 

In this work, to exploit the advantages of gradient coding and biased compression functions in DL under the communication bottleneck with stragglers, we propose a new DL method, compressed gradient coding with error feedback (COCO-EF). COCO-EF is a meta-algorithm, which can be applied with any biased compressed functions. In COCO-EF, before the training starts, the training data are divided into subsets and allocated to the devices redundantly in a pairwise balanced scheme, motivated by stochastic gradient coding in \cite{bitar2020stochastic}. In each training iteration, the server broadcasts the current global model to all devices, and the devices compute the local gradients based on the local subsets of the training data. After that, each non-straggler device encodes the local gradients into a single vector, and incorporates the compression error stored from the previous iterations into this coded vector. By using a specific type of biased compression function, the coded vector incorporated with error is compressed and transmitted to the server to reduce the communication overhead. After receiving from the non-stragglers, based on the redundancy of training data allocation, the server approximately reconstruct the global gradient for model update. We analyze the convergence performance of COCO-EF theoretically for smooth loss functions, and show that it can attain a convergence rate of \(O\left( {\frac{1}{{\sqrt {T } }}} \right)\). Finally, we provide various numerical results to demonstrate the superiority of the proposed method compared to the baselines and verify our theoretical findings.   

Compared with existing works related to this topic, the differences between our approach and prior studies are summarized as follows:
\begin{itemize}
    \item Both our method and \cite{bitar2020stochastic} adopt a stochastic gradient coding scheme to deal with the stragglers. However, in \cite{bitar2020stochastic}, local gradients are encoded and transmitted to the server without any compression, leading to a substantial communication burden. In contrast, in our proposed method, compressed messages are transmitted by the devices to reduce the communication overhead. 
    \item Relative to \cite{chengxi20241-BitGC}, both our work and \cite{chengxi20241-BitGC} compress coded vectors under the same stochastic gradient coding scheme before transmission to the server, in order to reduce the communication overhead and to address the stragglers. However, \cite{chengxi20241-BitGC} employs a specific type of unbiased compression function. In contrast, our approach leverages biased compression functions to achieve further improvements in communication reduction in the presence of stragglers.
    \item  Similar to \cite{karimireddy2019error,beznosikov2023biased}, our method also exploits the advantages of biased compression functions with the error feedback mechanism to reduce the communication overhead for DL tasks. However, \cite{karimireddy2019error,beznosikov2023biased} do not consider stragglers and incorporate the compression error directly into the local gradients. With this operation, their learning performance degrades significantly in the presence of stragglers due to the absence of information from these stragglers. Considering this, how to exploit the potential of biased compression functions with the error feedback mechanism to reduce communication overhead in the presence of stragglers remains an open problem.
We address these challenges thoroughly in this work. In the presence of potential stragglers, our method combines gradient coding with biased compression and error feedback to handle stragglers while simultaneously reducing communication overhead. In this scheme, the compression error is incorporated into coded gradients on the devices, which is a very different operation compared with \cite{karimireddy2019error,beznosikov2023biased}, resulting in a substantially different analysis and empirically superior learning performance. 
\end{itemize}

Based on that, our contributions are stated as follows:
\begin{itemize}
    \item We propose a new DL method, COCO-EF, to improve the learning performance of previous works under the challenges of the communication bottleneck and stragglers. On one hand, leveraging redundant training data allocation and the gradient coding scheme, COCO-EF compensates for missing information from stragglers by utilizing messages from non-stragglers, thus mitigating the negative impact of stragglers. On the other hand, by encoding and compressing local gradients using biased compression functions before transmission to the server, the communication overhead is reduced significantly. Specifically, to offset information loss and the bias in compression, an error feedback mechanism is incorporated into COCO-EF, ensuring satisfactory learning performance despite the reduced communication overhead. 
    \item We rigorously characterize the convergence performance of the proposed method for smooth loss functions with constant learning rates, and show that it achieves a convergence rate of \( O\left( \frac{1}{\sqrt{T}} \right) \), which improves upon the \( O\left( \frac{1}{T^{1/4}} \right) \) rate established in~\cite{chengxi20241-BitGC}.
    \item We present extensive numerical results to demonstrate the superiority of the proposed method over the baselines and to validate our theoretical findings, which show that COCO-EF attains significantly better learning performance under the same communication overhead in the presence of stragglers.  
\end{itemize}

The rest of this paper is organized as follows. In Section~\ref{problem model}, the problem model is introduced. In Section~\ref{our method}, we propose our method. In Section~\ref{performance analysis}, we analyze the performance of the proposed method. Numerical results are provided and discussed in Section~\ref{simulations}. The conclusions are presented in Section~\ref{conclusions}. We provide all missing proofs in the appendices. 

\section{Problem Model}
\label{problem model}
The considered problem is formally formulated as follows. Suppose there are $N$ devices and a central server, which collaborate to train a model with a training dataset \({\mathcal{W}}\) composed of $M$ subsets: \(\mathcal{W} = \left\{ {{\mathcal{W}_1},...,{\mathcal{W}_M}} \right\}\). This is equivalent to solving the following optimization problem \cite{chengxi20241-BitGC,bitar2020stochastic}:  
\begin{align}
    \label{basic problem}
   \arg {\min _{{\boldsymbol{\theta }} \in {\mathbb{R}^D}}}F\left( {\boldsymbol{\theta }} \right), 
\end{align}
 where $\boldsymbol{\theta}$ is the model parameter vector, and $F\left( {\boldsymbol{\theta}} \right)$ represents the overall training loss expressed as 
\begin{align}
\label{overall loss}
    F\left( {\boldsymbol{\theta }} \right) = \sum\limits_{k = 1}^M {{f_k}\left( {\boldsymbol{\theta }} \right)}.
\end{align}
In (\ref{overall loss}), $f_k$ is the training loss associated with subset ${\mathcal{W}_k}$. 

Typically, under the DL framework, before the training starts, the subsets of \({\mathcal{W}}\) are allocated to the devices \cite{ye2018communication,bitar2020stochastic,chengxi20241-BitGC}.  Here, the data allocation is determined by the system designer and is therefore known to the server and all devices.  Afterward, the training is implemented by multiple iterations. In each iteration, the server broadcasts the global model to all devices, and each device computes the local gradients based on its local training data and transmits them to the server. After aggregating the local gradients from the devices, the server computes the global gradient and updates the global model accordingly, which ends the current iteration \cite{mcmahan2017communication,cao2023communication}. 

In practice, some stragglers among the devices may suffer from significant delays during local computations or communication with the server due to various incidents. In such cases, only the non-stragglers participate in the training \cite{tandon2017gradient}.  In the absence of detailed prior information about stragglers, it is reasonable to assume that, in each iteration, any device can become a straggler with probability \(p\), and the straggler behavior is independent both across iterations and among different devices \cite{bitar2020stochastic,li2024decentralized,chengxi20241-BitGC,adikari2020decentralized}.  It is intuitive that the missing information from the stragglers can degrade the learning performance compared to cases where no stragglers are present. Mitigating the negative impact of stragglers is necessary for DL in various applications. Additionally, considering that in each iteration multiple non-straggler devices transmit high-dimensional and real-valued vectors to the server, and that communication resources in the DL system are limited, it is crucial to reduce the communication overhead while maintaining satisfactory learning performance \cite{chengxi20241-BitGC}.

\section{COCO-EF: Compressed Gradient Coding with Error Feedback}
\label{our method}

In this section, we propose a new method to cope with the problem introduced in Section~\ref{problem model}. 

Before the training, the subsets in \(\mathcal{W} = \left\{ \mathcal{W}_1, \dots, \mathcal{W}_M \right\}\) are allocated to the devices in a pairwise balanced scheme, as done in the stochastic gradient coding scheme \cite{bitar2020stochastic}.  This scheme is adopted since it can be easily approximated by a completely random distribution of the training subsets in practice.  In this case, subset $\mathcal{W}_k$ is allocated to $d_k$ devices, \(\forall k\), and the number of devices that hold both $\mathcal{W}_{k_1}$ and $\mathcal{W}_{k_2}$ is \(\frac{{d_{k_1}}{d_{k_2}}}{N}\) for \(k_1 \neq k_2\). Based on that, we can use a matrix \(\mathbf{S}\) to represent the allocation of the training subsets among the devices, with \(s(i,k)\) being the \((i,k)\)-th element. In \(\mathbf{S}\), \(s(i,k) = 1\) indicates \(\mathcal{W}_k\) is allocated to device \(i\), while \(s(i,k) = 0\) implies the opposite. Each device $i$ initializes its local error vector as ${\mathbf{e}}_i^0 = 0$, $\forall i$, which will be used to store the compression error during the training process. 

During the training, in each iteration $t$, the server broadcasts the current global model $\boldsymbol{\theta}^t$ to all devices. After that, each non-straggler device $i$ computes the local gradients corresponding to its local subsets and obtains $\left\{ {\nabla {f_k}\left( {{{\boldsymbol{\theta}}^t}} \right)|k\in \left\{ {1,...,M} \right\},s\left( {i,k} \right) \ne 0} \right\}$. With these local gradients, device $i$ encodes them as 
\begin{align}
    \label{encoding}
    {\mathbf{g}}_i^t = \sum\limits_{k \in \mathcal{S}_i} {\frac{1}{{{d_k}{\left(1-p\right)}}}} \nabla {f_k}\left( {{{\boldsymbol{\theta}}^t}} \right),
\end{align}
where ${\mathcal{S}_i} \triangleq \left\{ {\left. k \right|s\left( {i,k} \right) \ne 0} \right\}$. After that, motivated by the error feedback mechanism\footnote{{In addition to error feedback, other techniques can be used to mitigate the negative impact of compression errors, such as compression of gradient differences \cite{stich2020communication}. However, since error feedback is specifically designed for biased compression \cite{karimireddy2019error,beznosikov2023biased}, whereas compression of gradient differences is designed for unbiased compression, we adopt error feedback in our method.}} in \cite{karimireddy2019error}, device $i$ incorporates its local error vector ${{\mathbf{e}}_i^t}$ with the coded vector and compresses it as
\begin{align}
    \label{compress}
    {\mathbf{\hat g}}_i^t = \mathcal{C}\left( {\gamma {\mathbf{g}}_i^t + {\mathbf{e}}_i^t} \right),
\end{align}
where $\mathcal{C}:{\mathbb{R}^D} \to {\mathbb{R}^D}$ is a specific type of biased compression function, and $\gamma $ denotes the learning rate. Here, the error vector represents the compression error accumulated from the previous iterations before obtaining the global model ${{\boldsymbol{\theta}}^t}$. By adding the error back into the coded gradient, the compression bias in the previous iterations can be compensated, which guarantees the correct direction to update the model. Note that the length of the compressed vector ${\mathbf{\hat g}}_i^t$ is the same as the input vector of the compression function, and the communication is compressed based on the fact that less bits are required to transmit the compressed vector than the original vector. Some examples of biased compression functions $\mathcal{C}$ are listed as follows: 
    \begin{itemize}
   \item   \textbf{Grouped sign-bit quantization.}
The input vector is $\mathbf{g} \in \mathbb{R}^D$.
We partition the index set $\{1,\dots,D\}$ into $M_0$ non-overlapping groups.
Group $m$ is denoted by $\mathcal{I}_m$ with cardinality $|\mathcal{I}_m|$, for $m=1,\dots,M_0$.
For each group $m$, we extract the subvector
\[
    \mathbf{g}_m = (g_j)_{j \in \mathcal{I}_m} \in \mathbb{R}^{|\mathcal{I}_m|},
\]
where $g_j$ is the $j$-th element in the input vector $\mathbf{g}$.
We perform the following compression independently on each group:
\begin{align}
\label{signbit_each_group}
    \mathcal{C}_m(\mathbf{g}_m)
    =
    \operatorname{sign}(\mathbf{g}_m)\,
    \frac{\|\mathbf{g}_m\|_1}{|\mathcal{I}_m|},
    \qquad m = 1,\dots,M_0,
\end{align}
 where ${\text{sign}}\left( {\cdot} \right)$ is the sign operator, and ${{{\left\| {\cdot} \right\|}_1}}$ is the $l^1$ norm.
The output of grouped sign-bit quantization is obtained by concatenating the outputs in (\ref{signbit_each_group}):
\begin{align}
    \mathcal{C}(\mathbf{g})
    &=
    \mathcal{C}_1(\mathbf{g}_1) \,\Vert\,
    \mathcal{C}_2(\mathbf{g}_2) \,\Vert\, \cdots \,\Vert\,
    \mathcal{C}_{M_0}(\mathbf{g}_{M_0}).
\end{align}
         Note that, the concatenation preserves the same element order as in the original input vector $\mathbf{g}$. When $M_0=1$, grouped sign-bit quantization is also known as \textbf{sign-bit quantization}. 
        \item \textbf{Top-$K$ sparsification.} In this case, the output of the compression function only remains $K$ elements in the input of the compression function with the largest magnitude while zeroing out the others. 
    \end{itemize}

After the compression, the error vector at non-straggler device $i$ is updated as
\begin{align}
    \label{error update}
    {\mathbf{e}}_i^{t + 1} = \gamma {\mathbf{g}}_i^t + {\mathbf{e}}_i^t - \mathcal{C}\left( {\gamma {\mathbf{g}}_i^t + {\mathbf{e}}_i^t} \right),
\end{align}
which is the difference between the original and the compressed vector. 

Let us use $I_i^t=1$ to indicate that device $i$ is not a straggler in iteration $t$ while $I_i^t=0$ indicates the opposite, where it holds that 
\begin{align}
    \label{straggler model}
    \Pr \left( {I_i^t = 1} \right) = 1 - p,\Pr \left( {I_i^t = 0} \right) =  p,
\end{align}
based on the straggler model introduced in Section~\ref{problem model}. 
Different from the non-stragglers, for the straggler device $i$, where $I_i^t = 0$, no message is successfully transmitted to the server\footnote{Note that the straggling behavior of the stragglers can be caused either by delayed or failed computations or by communication errors.}. In that case, the error vectors on the stragglers remain unchanged, i.e., ${\mathbf{e}}_i^{t + 1} = {\mathbf{e}}_i^t$. After receiving the messages from all non-stragglers, the server aggregates the messages as
\begin{align}
    \label{server agg}
    {{\mathbf{\hat g}}^t} = \sum\limits_{\left\{ {\left. i \right|I_i^t \ne 0} \right\}} {{\mathbf{\hat g}}_i^t},
\end{align}
which is an approximate version of the global gradient. 
Finally, the server updates the global model as
\begin{align}
    \label{update model}
    {{\boldsymbol{\theta}}^{t + 1}} = {{\boldsymbol{\theta }}^t} - {{\mathbf{\hat g}}^t}.
\end{align}
 We describe the proposed method as Algorithm~\ref{Alg 1}, and provide its flowchart in Fig.~\ref{fig: coco-ef}. 
 
% \begin{algorithm}
% \caption{ {COCO-EF}}\label{Alg 1}
% \textbf{Input:} Initial parameter vector \({\boldsymbol{\theta}}^0\), stepsize $\gamma$, and number of iterations $T$\\
% \textbf{1. Before the training:} The training subsets are allocated to the devices in a pairwise balanced scheme \\
% \textbf{2. During training:}\\
% \textbf{Initialize: }$t = 0$, local error vectors ${\mathbf{e}}_i^0 = 0$, $\forall i$\\
%  \While{\(t \leq T \)}{
%  \textcolor{my1}{\textbf{In parallel for all devices} \(i \in \left\{ {1,...,N} \right\}\):\\
%  \eIf{\(I_i^t = 1\)}{
%  \For{\(k \in \left\{ {1,...,M} \right\}\)}{
%         \If{\(s\left( {i,k} \right) = 1\)}{
%   Compute local gradient \( \nabla f_k(\boldsymbol{\theta}_i^{t})\)\;
%    }
%     }
%     Encode the local gradients into ${\mathbf{g}}_i^t$ as (\ref{encoding});\\
%    Incorporate ${\gamma {\mathbf{g}}_i^t + {\mathbf{e}}_i^t}$; \\  
%    Compress to obtain ${\mathbf{\hat g}}_i^t$ as (\ref{compress});\\
%    Update ${\mathbf{e}}_i^{t + 1}$ as (\ref{error update});\\
%    Send ${\mathbf{\hat g}}_i^t$ to the server;\\
%     }
%     {
%    ${\mathbf{e}}_i^{t + 1}={\mathbf{e}}_i^{t}$;
%   }}
%   \textcolor{my2}{\textbf{For the server:}\\
%   Receive the messages and compute ${{\mathbf{\hat g}}^t}$ as (\ref{server agg});\\
%   Update the global model as (\ref{update model});\\}
%   $t=t+1$;\\
% }
% \end{algorithm}
\begin{algorithm}
\label{Alg 1}
 {\caption{ {COCO-EF}}}
 {\textbf{Input:} Initial parameter vector \({\boldsymbol{\theta}}^0\), stepsize $\gamma$, and number of iterations $T$\\
\textbf{1. Before the training:} The training subsets are allocated to the devices in a pairwise balanced scheme \\}
{\textbf{2. During training:}\\
\textbf{Initialize: }$t = 0$, local error vectors ${\mathbf{e}}_i^0 = 0$, $\forall i$\\}
 {\While{\(t \leq T \)}{
 \textbf{For the server:} Broadcast the global model $\boldsymbol{\theta}^t$ to all devices;\\
 {\textbf{In parallel for all devices} \(i \in \left\{ {1,...,N} \right\}\):\\
 \eIf{\(I_i^t = 1\)}{
 \For{\(k \in \left\{ {1,...,M} \right\}\)}{
        \If{\(s\left( {i,k} \right) = 1\)}{
  Compute local gradient \( \nabla f_k(\boldsymbol{\theta}_i^{t})\)\;
   }
    }
    Encode the local gradients into ${\mathbf{g}}_i^t$ as (\ref{encoding});\\
   Incorporate ${\gamma {\mathbf{g}}_i^t + {\mathbf{e}}_i^t}$; \\  
   Compress to obtain ${\mathbf{\hat g}}_i^t$ as (\ref{compress});\\
   Update ${\mathbf{e}}_i^{t + 1}$ as (\ref{error update});\\
   Send ${\mathbf{\hat g}}_i^t$ to the server;\\
    }
    {
   ${\mathbf{e}}_i^{t + 1}={\mathbf{e}}_i^{t}$;
  }}
  {\textbf{For the server:}\\
  Receive the messages and compute ${{\mathbf{\hat g}}^t}$ as (\ref{server agg});\\
  Update the global model as (\ref{update model});\\}
  $t=t+1$;\\
}}
\end{algorithm}
\begin{figure*}
    \centering
    \includegraphics[width=\linewidth]{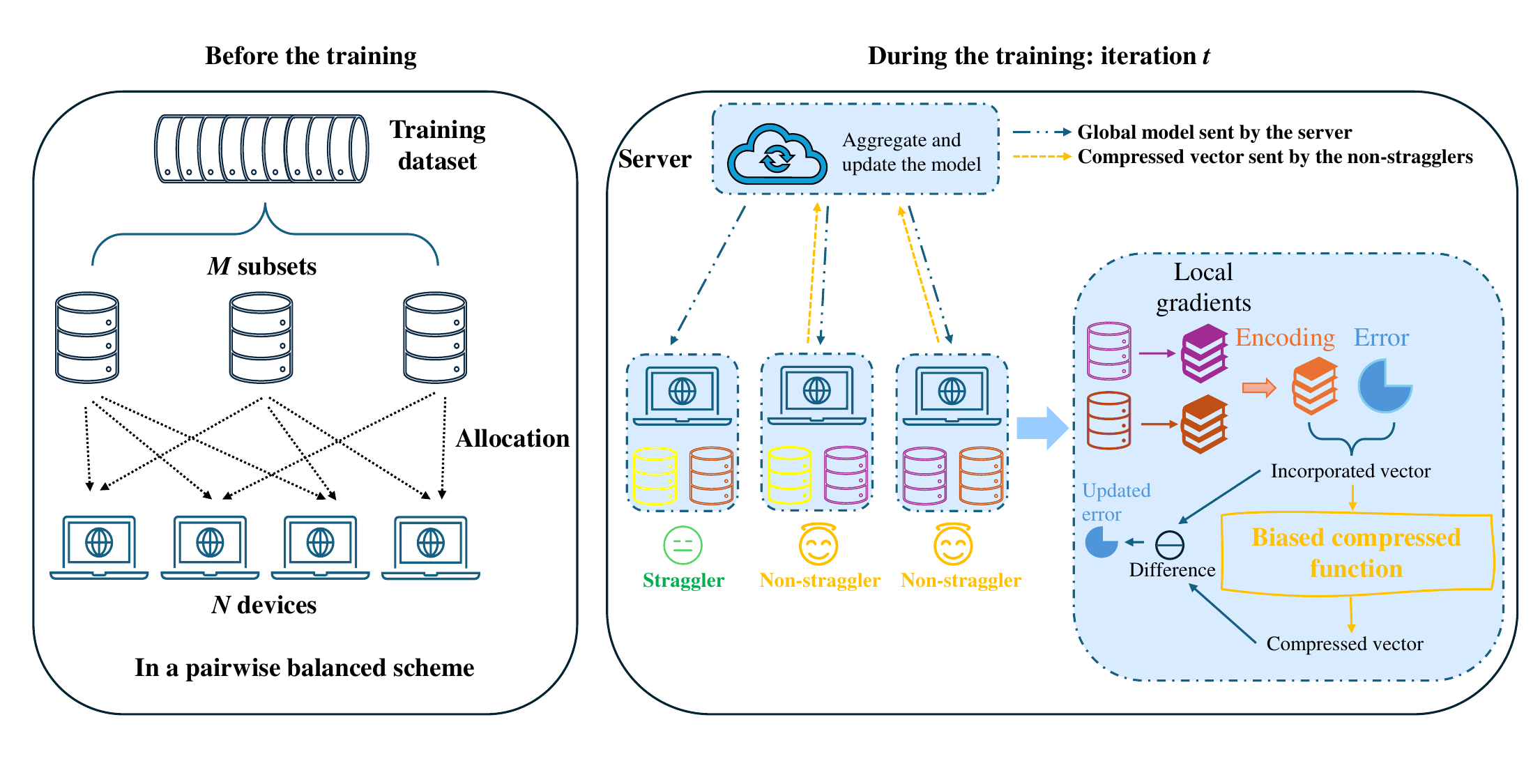}
    \caption{The flowchart of COCO-EF.}
    \label{fig: coco-ef}
\end{figure*}

\section{Convergence Analysis}
\label{performance analysis}
In this section, the convergence performance of COCO-EF is analyzed. Before providing the main theorem, some widely-used assumptions are stated as follows.  
\begin{assumption}
    \label{assp smooth}
The training loss function $F$ is $L$-smooth \cite{karimireddy2019error,ding2023dsgd}, which implies
    \begin{align}
        \label{smooth}
  F\left( {\mathbf{x}} \right) \leq & F\left( {\mathbf{y}} \right) + \left\langle {\nabla F\left( {\mathbf{y}} \right),{\mathbf{x}} - {\mathbf{y}}} \right\rangle  + \frac{L}{2}{\left\| {{\mathbf{x}} - {\mathbf{y}}} \right\|^2},\forall {\mathbf{x}},{\mathbf{y}},\\
      \label{smooth L}
  &    \left\| {\nabla F\left( {\mathbf{x}} \right) - \nabla F\left( {\mathbf{y}} \right)} \right\| \leqslant L\left\| {{\mathbf{x}} - {\mathbf{y}}} \right\|,\forall {\mathbf{x}},{\mathbf{y}}.
  \end{align}
 Here, ${\left\| {\cdot} \right\|}$ denotes the squared $\ell^2$-norm, i.e., $\|\mathbf{x}\|^2 = \sum_{n=1}^{D} x_n^2$, where $x_n$ is the $n$-th element in $\mathbf{x}\in \mathbb{R}^D$.
\end{assumption}
\begin{assumption}
    \label{assp bounded heter}
    The heterogeneity among $\left\{ {{\mathcal{W}_1},...,{\mathcal{W}_M}} \right\}$ is bounded \cite{karimireddy2022byzantinerobust,ding2023dsgd,zhu2023byzantine,islamov2024asgrad}, which indicates  
    \begin{align}
        \label{bounded heterogenity}
        \left\| {\nabla {f_k}\left( {\boldsymbol{\theta}} \right) - \frac{1}{M}\nabla F\left( {\boldsymbol{\theta}} \right)} \right\|^2 \leqslant {\beta ^2},\forall k,\forall {\boldsymbol{\theta}}.
    \end{align}
\end{assumption}
\begin{assumption}
    \label{assp lower bound}
    The training loss is lower bounded by some constant ${{F^*}}$, which means \cite{koloskova2019decentralized,jin2024sign}
    \begin{align}
        \label{lower bound of F}
        F\left( {\boldsymbol{\theta}} \right) \geqslant {F^*},\forall \boldsymbol{\theta}.
    \end{align} 
\end{assumption}
\begin{assumption}
    \label{assp agg error}
    The compression discrepancy is bounded by \cite{li2023analysis} 
    \begin{align}
        \label{bounded agg}
        \mathbb{E}\left[ {{{\left\| {\sum\limits_{i = 1}^N {\left[ {{{\mathbf{x}}_i} - \mathcal{C}\left( {{{\mathbf{x}}_i}} \right)} \right]} } \right\|}^2}} \right] \leqslant {q_A} {{{\left\| {\sum\limits_{i = 1}^N {{{\mathbf{x}}_i}} } \right\|}^2}} ,\forall {{\mathbf{x}}_i},
    \end{align}
where $q_A<1$ is a constant. 
\end{assumption}
\begin{assumption}
    \label{assp compre}
    The biased compression function $\mathcal{C}$ satisfies \cite{karimireddy2019error} 
    \begin{align}
        \label{bounded compression}
        \mathbb{E}\left[ {{{\left\| {\mathcal{C}\left( {\mathbf{x}} \right) - {\mathbf{x}}} \right\|}^2}} \right] \leqslant \delta {\left\| {\mathbf{x}} \right\|^2},\forall {\mathbf{x}},
    \end{align}
    where $0 \leqslant \delta  < 1$ is a constant.
\end{assumption}

         Note that, in Assumption~\ref{assp agg error} and Assumption~\ref{assp compre}, the expectation is taken over the randomness of the compression function $\mathcal{C}$, which may be possibly random. Although the grouped sign-bit quantization and top-$K$ sparsification introduced in Section~\ref{our method} are both deterministic compression functions, we retain the expectation operator to make these assumptions applicable to a broader variety of biased compression functions.
         Based on Assumption~\ref{assp agg error} and Assumption~\ref{assp compre}, we can derive the following propositions.

    \begin{proposition}
    The parameter $q_A$ in Assumption~\ref{assp agg error} depends on the value of $\delta$, where a larger value of $\delta$ indicates a higher level of information loss caused by the compression. To illustrate this, consider the special case where $\delta$ is very close to zero. In this case, the information loss due to compression is negligible, and $q_A$ would be close to zero as well. 
    \end{proposition}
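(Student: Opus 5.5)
The statement is qualitative, so I would first make it precise. I would define $q_A(\delta)$ as the smallest constant for which (\ref{bounded agg}) holds, given a compression function $\mathcal{C}$ satisfying Assumption~\ref{assp compre} with parameter $\delta$. I would then prove two facts: (i) $q_A$ is bounded above by an increasing function of $\delta$; (ii) this bound tends to $0$ as $\delta \to 0$. The idea is to transfer the per-vector error bound of Assumption~\ref{assp compre} to the aggregate error appearing in Assumption~\ref{assp agg error}.

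\textbf{The case $\delta = 0$.} Assumption~\ref{assp compre} gives $\mathbb{E}\|\mathcal{C}(\mathbf{x})-\mathbf{x}\|^2 = 0$, so $\mathcal{C}(\mathbf{x}) = \mathbf{x}$ almost surely for every $\mathbf{x}$. Hence every summand ${\mathbf{x}}_i - \mathcal{C}({\mathbf{x}}_i)$ in (\ref{bounded agg}) vanishes, the left-hand side is $0$, and $q_A = 0$ is admissible. This is the limiting case of the claim.

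\textbf{General $\delta$.} I would write $\mathbf{r}_i = {\mathbf{x}}_i - \mathcal{C}({\mathbf{x}}_i)$ and apply the Cauchy--Schwarz (Jensen) inequality $\|\sum_{i=1}^N \mathbf{r}_i\|^2 \le N \sum_{i=1}^N \|\mathbf{r}_i\|^2$. Taking expectations and invoking Assumption~\ref{assp compre} term by term yields
\begin{align*}
\mathbb{E}\Big[\Big\|\sum_{i=1}^N \big[{\mathbf{x}}_i - \mathcal{C}({\mathbf{x}}_i)\big]\Big\|^2\Big] \le N\delta \sum_{i=1}^N \|{\mathbf{x}}_i\|^2 .
\end{align*}
It remains to compare $\sum_i \|{\mathbf{x}}_i\|^2$ with $\|\sum_i {\mathbf{x}}_i\|^2$.

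\textbf{Main obstacle.} The comparison above cannot hold uniformly over all inputs: the ${\mathbf{x}}_i$ may nearly cancel, so that $\sum_i {\mathbf{x}}_i \approx 0$ while each ${\mathbf{x}}_i$ is large. I would therefore restrict to the class of inputs that actually arise in COCO-EF, namely $\gamma{\mathbf{g}}_i^t + {\mathbf{e}}_i^t$. For these inputs I would assume, or argue via bounded heterogeneity (Assumption~\ref{assp bounded heter}) and the pairwise balanced allocation, a non-cancellation condition
\begin{align*}
\sum_i \|{\mathbf{x}}_i\|^2 \le \kappa \Big\|\sum_i {\mathbf{x}}_i\Big\|^2
\end{align*}
for some constant $\kappa$ independent of $\delta$. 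This gives $q_A \le N\kappa\delta$, which is increasing in $\delta$ and tends to $0$ as $\delta \to 0$, as claimed.

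If such a $\kappa$ cannot be justified rigorously, I would present the result as a sufficient condition, with the $\delta = 0$ computation as the exact limiting case. I would complement it with a direct check for the compressors of Section~\ref{our method}. For top-$K$ sparsification with $K \to D$, and for grouped sign-bit quantization with group sizes tending to one, both the per-vector error and the aggregate error vanish simultaneously. So in both examples $q_A$ and $\delta$ go to zero together.
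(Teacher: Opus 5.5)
Your $\delta=0$ argument is correct, and it is essentially the only part of the statement that can be made rigorous. The paper gives no proof of this proposition and treats it as an informal remark resting on exactly that limiting case.

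\textbf{The step for $\delta>0$ has a genuine gap.} The non-cancellation bound $\sum_i\|\mathbf{x}_i\|^2\le\kappa\|\sum_i\mathbf{x}_i\|^2$ does not follow from Assumption~\ref{assp bounded heter}. That assumption bounds heterogeneity only from above and gives nothing that keeps $\|\sum_i\mathbf{x}_i\|$ away from zero. Look at the inputs where Assumption~\ref{assp agg error} is actually used in the proof of Lemma~\ref{lemma error}, namely $\mathbf{x}_i=\gamma\mathbf{g}_i^t+\mathbf{e}_i^t$. Their gradient part sums to $\gamma\nabla F(\boldsymbol{\theta}^t)/(1-p)$, which is driven toward zero as $\boldsymbol{\theta}^t$ approaches a stationary point. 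The individual coded gradients of heterogeneous subsets need not vanish, and nothing in the assumptions or the algorithm keeps the ratio bounded. Moreover, $\kappa$ could not be independent of $\delta$, since the $\mathbf{e}_i^t$ are themselves produced by $\mathcal{C}$.

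\textbf{The gap cannot be closed.} Assumption~\ref{assp compre} with small $\delta$ does not imply Assumption~\ref{assp agg error} with small, or even finite, $q_A$. Take $\mathcal{C}(\mathbf{x})=\mathbf{x}-\sqrt{\delta}\|\mathbf{x}\|\mathbf{u}$ with $\mathbf{u}$ a fixed unit vector. It satisfies (\ref{bounded compression}) with equality. Yet for $\mathbf{x}_2=-\mathbf{x}_1\neq\mathbf{0}$ and all other inputs zero, the left-hand side of (\ref{bounded agg}) equals $4\delta\|\mathbf{x}_1\|^2>0$ while the right-hand side is $0$.

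Your ``direct check'' fails for the same reason.
\begin{itemize}
\item For top-$K$ with $D=2$, $K=1$, take $\mathbf{x}_1=(1+\eta,1)$ and $\mathbf{x}_2=-(1,1+\eta)$. The aggregate compression error is $(-1,1)$ while $\sum_i\mathbf{x}_i=(\eta,-\eta)$, a ratio of $1/\eta^2\to\infty$.
\item Prepending $K-1$ entries equal to $2$ in $\mathbf{x}_1$ and $-2$ in $\mathbf{x}_2$ (zeros in any remaining coordinates) extends this to every $K<D$. This includes $K=D-1$, where $\delta=1/D$ is as small as you like.
\item For sign-bit quantization, $\mathbf{x}_1=(1,\epsilon)$ and $\mathbf{x}_2=(-1,\epsilon)$ give aggregate error $(0,\epsilon-1)$ against $\sum_i\mathbf{x}_i=(0,2\epsilon)$. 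This embeds in any group of size at least two.
\end{itemize}
Also, $\delta=1-K/D$ and $\delta=1-1/\max_m|\mathcal{I}_m|$ take only discrete values. So ``$K\to D$'' or ``group sizes tending to one'' just means $\mathcal{C}$ becomes the identity, which is the $\delta=0$ case again.

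Hence, under your own definition of $q_A$ as the smallest admissible constant, $q_A=+\infty$ for every nontrivial top-$K$ or grouped sign-bit compressor once $N\ge 2$. What you can legitimately state is the endpoint $\delta=0\Rightarrow q_A=0$. The behavior for small positive $\delta$ should be flagged as a heuristic or modeling assumption, not presented as derived. This is effectively what the paper does, leaning on the empirical evidence cited in Proposition~\ref{prop2}.
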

     \begin{proposition}
     \label{prop2}
     For grouped sign-bit quantization introduced in Section~\ref{our method}, the constant $\delta$ defined in Assumption~\ref{assp compre} can be expressed as \(\delta  = 1 - {\min _{m \in \left\{ {1,...,{M_0}} \right\}}}\frac{1}{{\left| {{\mathcal{I}_m}} \right|}}\) \cite{li2023analysis}. For top-$K$ sparsification, $\delta$ is given as $\delta=1-\frac{K}{D}$ \cite{li2023analysis}, where $D$ is the length of the input vector of the compression function. In \cite{li2023analysis}, it has been demonstrated that the empirical value of $q_A$ remains well bounded by 1 during training, whether sign-bit quantization or top-$K$ sparsification is adopted as the compression function in the DL setting.
     \end{proposition}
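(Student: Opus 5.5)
The two claims in Proposition~\ref{prop2} are deterministic per-vector inequalities, so the expectation in (\ref{bounded compression}) plays no role. For each compression function I will compute $\|\mathcal{C}(\mathbf{x})-\mathbf{x}\|^2$ in closed form and bound it by $\delta\|\mathbf{x}\|^2$. The last sentence about $q_A$ is an empirical remark cited from \cite{li2023analysis}, so I will only point to it and not prove it.

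\textbf{Grouped sign-bit quantization.} The squared norm splits over the disjoint groups $\mathcal{I}_m$, so it suffices to treat one group. Write $n=|\mathcal{I}_m|$. Using $\operatorname{sign}(g_j)g_j=|g_j|$, I expand
\[
\Big\|\operatorname{sign}(\mathbf{g}_m)\tfrac{\|\mathbf{g}_m\|_1}{n}-\mathbf{g}_m\Big\|^2
=\|\mathbf{g}_m\|^2-2\tfrac{\|\mathbf{g}_m\|_1^2}{n}+n\tfrac{\|\mathbf{g}_m\|_1^2}{n^2}
=\|\mathbf{g}_m\|^2-\tfrac{\|\mathbf{g}_m\|_1^2}{n}.
\]
Here I take $\operatorname{sign}(0)$ to be $\pm1$, or note that zero entries only help. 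Since $\|\mathbf{g}_m\|_1^2\ge\|\mathbf{g}_m\|^2$, the group error is at most $(1-1/n)\|\mathbf{g}_m\|^2$. Summing over $m$ and bounding each factor $1-1/|\mathcal{I}_m|$ by its maximum gives $\delta=1-\min_m 1/|\mathcal{I}_m|$. This bound is tight, as a one-hot subvector in the largest group shows.

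\textbf{Top-$K$ sparsification.} The error vector consists of the $D-K$ smallest-magnitude entries, so the error is $\sum_{j\notin\text{top}K}x_j^2$. An averaging argument bounds this. The mean of the $D-K$ smallest squared entries is at most the mean of all $D$ squared entries. Hence $\sum_{\text{small}}x_j^2\le\frac{D-K}{D}\|\mathbf{x}\|^2=(1-K/D)\|\mathbf{x}\|^2$.

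\textbf{Main obstacle.} The only delicate step is the norm inequality $\|\cdot\|_1^2\ge\|\cdot\|_2^2$ in the sign case, together with handling zero entries. Beyond that, I need to be clear that the statement about $q_A$ rests on citation and empirical evidence rather than on a proof.
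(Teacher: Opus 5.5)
Your proof is correct, and it does more than the paper does. The paper gives no derivation for this proposition: it takes both values of $\delta$ from \cite{li2023analysis} and reports the behaviour of $q_A$ as an empirical finding of that reference, which, as you say, cannot be proved. You replace the citation with a self-contained check: the per-group identity $\|\mathcal{C}_m(\mathbf{g}_m)-\mathbf{g}_m\|^2=\|\mathbf{g}_m\|^2-\|\mathbf{g}_m\|_1^2/|\mathcal{I}_m|$ (an inequality if $\operatorname{sign}(0)=0$), the bound $\|\mathbf{g}_m\|_1\ge\|\mathbf{g}_m\|$, and the averaging argument for top-$K$. The citation buys brevity; your computation shows where the worst case lies. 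For inputs without zero entries, the identity gives the error-to-norm ratio as $1-\|\mathbf{g}_m\|_1^2/(|\mathcal{I}_m|\,\|\mathbf{g}_m\|^2)$. This ratio is $0$ when all entries have equal magnitude, and about $1-2/\pi\approx0.36$ for Gaussian-like entries. It approaches $1-1/|\mathcal{I}_m|$ only for nearly one-hot inputs. This offers a heuristic reading of the paper's later remark that its experiments violate $\delta<0.5$ yet still converge, which the bare worst-case constant hides.

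One small correction to your tightness remark: it depends on the convention for $\operatorname{sign}(0)$. With $\operatorname{sign}(0)=0$, an exactly one-hot subvector $(a,0,\dots,0)$ maps to $(a/n,0,\dots,0)$. Its error is then $(1-1/n)^2a^2$, not $(1-1/n)a^2$. In that case the constant is attained only as a supremum, e.g.\ along $(a,\epsilon,\dots,\epsilon)$ with $\epsilon\to0$. With $\operatorname{sign}(0)=\pm1$ your example is exact. Either way the proposition is unaffected, since it only asserts that the stated $\delta$ is admissible in Assumption~\ref{assp compre}.
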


Next, based on Assumptions \ref{assp smooth}-\ref{assp compre}, we present two lemmas to aid the derivation of the main theorem. 
\begin{lemma}
    \label{lemma 1}
    The following bound can be provided:
    \begin{align}
        \label{sum ig}
  &\mathbb{E}\left[ {\left. {{{\left\| {\sum\limits_{i = 1}^N {I_i^t{\mathbf{g}}_i^t} } \right\|}^2}} \right|{\mathcal{F}^t}} \right] \nonumber \\
   \leqslant& \frac{{2p{\beta ^2}\vartheta }}{{1 - p}}  + \left[ {\frac{p}{{\left( {1 - p} \right)N}} + 1 + \frac{{2p\vartheta }}{{\left( {1 - p} \right){M^2}}}} \right]{\left\| {\nabla F\left( {{{\boldsymbol{\theta }}^t}} \right)} \right\|^2},
    \end{align}
where $\mathbb{E}\left[ {\left. \cdot \right|{\mathcal{F}^t}} \right]$ is the expectation conditioned on iterations $0,...,t-1$, and 
\begin{align}
    \label{def vartheta}
    \vartheta  \triangleq \sum\limits_{k = 1}^M {\left( {\frac{1}{{{d_k}}} - \frac{1}{N}} \right)}.
\end{align}
\end{lemma}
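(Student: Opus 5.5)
We use a bias–variance decomposition with respect to the straggler indicators. Given $\mathcal{F}^t$, the model $\boldsymbol{\theta}^t$ and hence every coded vector $\mathbf{g}_i^t$ in (\ref{encoding}) is deterministic. The only randomness in $\sum_i I_i^t\mathbf{g}_i^t$ therefore comes from the independent Bernoulli variables $I_i^t$ in (\ref{straggler model}).

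\textbf{Step 1: the mean.} Since $\mathbb{E}[I_i^t]=1-p$, exchanging the order of summation gives
\[\mathbb{E}\left[\sum_i I_i^t\mathbf{g}_i^t\,\middle|\,\mathcal{F}^t\right]=\sum_{k=1}^M\frac{\sum_i s(i,k)}{d_k}\nabla f_k(\boldsymbol{\theta}^t)=\nabla F(\boldsymbol{\theta}^t).\]
Here we used that $\mathcal{W}_k$ is held by exactly $d_k$ devices. The second moment then splits as $\|\nabla F(\boldsymbol{\theta}^t)\|^2+\sum_i p(1-p)\|\mathbf{g}_i^t\|^2$, because the $I_i^t$ are independent across $i$ and each has variance $p(1-p)$. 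This accounts for the leading coefficient $1$ in front of $\|\nabla F(\boldsymbol{\theta}^t)\|^2$.

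\textbf{Step 2: expand the variance term with the pairwise balanced structure.} We have
\[p(1-p)\sum_i\|\mathbf{g}_i^t\|^2=\frac{p}{1-p}\sum_{k_1,k_2}\frac{\sum_i s(i,k_1)s(i,k_2)}{d_{k_1}d_{k_2}}\langle\nabla f_{k_1},\nabla f_{k_2}\rangle .\]
The inner count is $d_k$ on the diagonal and $d_{k_1}d_{k_2}/N$ off the diagonal. Adding and subtracting the diagonal terms with weight $1/N$, the expression becomes
\[\frac{p}{1-p}\left[\sum_k\left(\frac{1}{d_k}-\frac1N\right)\|\nabla f_k\|^2+\frac1N\Big\|\sum_k\nabla f_k\Big\|^2\right].\]
The last term equals $\frac{p}{(1-p)N}\|\nabla F(\boldsymbol{\theta}^t)\|^2$, which matches the middle coefficient in (\ref{sum ig}).

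\textbf{Step 3: apply heterogeneity.} We write $\nabla f_k=(\nabla f_k-\frac1M\nabla F)+\frac1M\nabla F$ and use $\|\mathbf{a}+\mathbf{b}\|^2\le2\|\mathbf{a}\|^2+2\|\mathbf{b}\|^2$ together with Assumption~\ref{assp bounded heter}. This gives
\[\|\nabla f_k\|^2\le 2\beta^2+\frac{2}{M^2}\|\nabla F\|^2.\]
Summing with the nonnegative weights $\frac1{d_k}-\frac1N$ (nonnegative since $d_k\le N$) yields $\frac{2p\beta^2\vartheta}{1-p}+\frac{2p\vartheta}{(1-p)M^2}\|\nabla F(\boldsymbol{\theta}^t)\|^2$, which completes (\ref{sum ig}).

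The main obstacle is the bookkeeping in Step 2. It requires the correct co-occurrence counts from the pairwise balanced scheme and the diagonal correction that produces exactly $\vartheta$ in (\ref{def vartheta}). The remaining steps are routine.
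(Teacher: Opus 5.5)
Your proof is correct and follows essentially the same route as the paper. Your bias--variance split is a repackaging of the paper's diagonal/off-diagonal expansion of $\sum_{i,j}\mathbb{E}[\langle I_i^t\mathbf{g}_i^t, I_j^t\mathbf{g}_j^t\rangle\,|\,\mathcal{F}^t]$ combined with $(1-p)\sum_i\mathbf{g}_i^t=\nabla F(\boldsymbol{\theta}^t)$, and your Steps 2--3 reproduce the paper's pairwise-balanced computation of $\sum_i\|\mathbf{g}_i^t\|^2$ and its heterogeneity split; you additionally state the nonnegativity of the weights $\frac{1}{d_k}-\frac{1}{N}$, which the paper uses only implicitly.
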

\begin{proof}
    Please see Appendix~\ref{appendix lemma 1}. 
\end{proof}
\begin{lemma}
    \label{lemma error}
    For $\delta  < 0.5$ and ${q_A} < \frac{{2\delta  + 1}}{2}$, the error vectors can be bounded as 
    \begin{align}
        \label{bound error}
       & \sum\limits_{t = 0}^T {\mathbb{E}\left[ {{{\left\| {\sum\limits_{i = 1}^N {{\mathbf{e}}_i^{t + 1}} } \right\|}^2}} \right]} \nonumber\\
        \leqslant & \left( {T + 1} \right){\gamma ^2}{\xi _1} + {\gamma ^2}{\xi _2}\sum\limits_{t  = 0}^T {\mathbb{E}\left[ {{{\left\| {\nabla F\left( {{{\boldsymbol{\theta }}^t }} \right)} \right\|}^2}} \right]},
    \end{align}
    where $\xi_1$ and $\xi_2$ are defined as (\ref{xi1}) and (\ref{xi2}).
    \begin{figure*}
        \begin{align}
        \label{xi1}
       {\xi _1} \triangleq \frac{1}{{1 - \left[ {\frac{{\left( {1 - p} \right)\left( {2\delta  + 1} \right)}}{2} + p} \right]}}\left\{ {\frac{{8{\beta ^2}p\delta \vartheta }}{{1 - p}} + \frac{{\left( {4\delta  + 2} \right)4\delta p{\beta ^2}\vartheta }}{{2\left( {1 - p} \right)\delta  + p}}\frac{1}{{1 - \left[ {2\left( {1 - p} \right)\delta  + p} \right]}}} \right\}.
    \end{align}
    \begin{align}
        \label{xi2}
       & {\xi _2} \triangleq \nonumber\\
        &\left\{ {\frac{{4p\delta }}{{1 - p}}\left( {\frac{1}{N} + \frac{{2\vartheta }}{{{M^2}}}} \right) + \frac{{{q_A}\left( {2\delta  + 1} \right)}}{{\left( {1 - p} \right)\left( {2\delta  + 1 - 2{q_A}} \right)}} + \frac{{\left[ {\frac{{\left( {1 - p} \right)\left( {2\delta  + 1} \right)}}{2} + p} \right]\frac{{\left( {4\delta  + 2} \right)2\delta p}}{{2\left( {1 - p} \right)\delta  + p}}\left( {\frac{1}{N} + \frac{{2\vartheta }}{{{M^2}}}} \right)}}{{\frac{{\left( {1 - p} \right)\left( {2\delta  + 1} \right)}}{2} + p - \left[ {2\left( {1 - p} \right)\delta  + p} \right]}}} \right\}\frac{1}{{1 - \left[ {\left( {1 - p} \right)\frac{{2\delta  + 1}}{2} + p} \right]}}.
    \end{align}
    \end{figure*}
    
\end{lemma}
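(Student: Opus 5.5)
We propose to bound the aggregate error $\mathbf{e}^{t}\triangleq\sum_{i=1}^N \mathbf{e}_i^{t}$ through a one-step recursion in expectation, and then sum that recursion over $t$. Write $\mathbf{u}_i^t \triangleq \gamma \mathbf{g}_i^t + \mathbf{e}_i^t$. By (\ref{error update}) and the straggler rule, $\mathbf{e}_i^{t+1} = I_i^t\left[\mathbf{u}_i^t - \mathcal{C}(\mathbf{u}_i^t)\right] + (1-I_i^t)\mathbf{e}_i^t$. Conditioning on $\mathcal{F}^t$ and on the straggler pattern, we split $\sum_i \mathbf{e}_i^{t+1}$ into two parts: the compressed residual over the non-stragglers, $\sum_i I_i^t[\mathbf{u}_i^t-\mathcal{C}(\mathbf{u}_i^t)]$, and the frozen errors of the stragglers, $\sum_i (1-I_i^t)\mathbf{e}_i^t$.

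For the first part we use Assumption~\ref{assp agg error}, applied with $\mathbf{x}_i = I_i^t \mathbf{u}_i^t$ and noting $\mathcal{C}(0)=0$ for both examples. This gives a bound $q_A\|\sum_i I_i^t(\gamma\mathbf{g}_i^t+\mathbf{e}_i^t)\|^2$. To keep the aggregate-error term visible, we write $\sum_i I_i^t \mathbf{e}_i^t = (1-p)\mathbf{e}^t + \sum_i (I_i^t-(1-p))\mathbf{e}_i^t$. The fluctuation term can only be controlled per device, so we will also need a recursion for the individual errors $\sum_i\|\mathbf{e}_i^{t}\|^2$. That recursion follows from Assumption~\ref{assp compre}: by Young's inequality with parameter chosen so the $\|\mathbf{e}_i^t\|^2$ coefficient is $2(1-p)\delta+p<1$ (this is where $\delta<0.5$ enters),
\[
\mathbb{E}\|\mathbf{e}_i^{t+1}\|^2 \le \left[2(1-p)\delta + p\right]\mathbb{E}\|\mathbf{e}_i^t\|^2 + c\,\gamma^2 (1-p)\,\mathbb{E}\|\mathbf{g}_i^t\|^2 .
\]
The $\gamma^2$-terms involving $\mathbf{g}$ are handled by Lemma~\ref{lemma 1} for $\|\sum_i I_i^t\mathbf{g}_i^t\|^2$. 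We split $\sum_i I_i^t \mathbf{g}_i^t$ into its mean $\nabla F(\boldsymbol{\theta}^t)$ plus a zero-mean fluctuation, whose variance is $\frac{p}{1-p}\left(\frac1N+\frac{2\vartheta}{M^2}\right)\|\nabla F\|^2 + \frac{2p\beta^2\vartheta}{1-p}$, as in the proof of Lemma~\ref{lemma 1}. This accounts for the factors $(\frac1N+\frac{2\vartheta}{M^2})$ and $\beta^2\vartheta$ appearing in $\xi_1,\xi_2$.

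Combining the two parts, we expect the aggregate recursion
\[
\mathbb{E}\|\mathbf{e}^{t+1}\|^2 \le \rho_1\,\mathbb{E}\|\mathbf{e}^t\|^2 + A\,\mathbb{E}[\text{per-device error term}] + \gamma^2\left(B_1 + B_2\,\mathbb{E}\|\nabla F(\boldsymbol{\theta}^t)\|^2\right),
\qquad \rho_1 = (1-p)\tfrac{2\delta+1}{2}+p .
\]
To reach the constant $\rho_1$, we apply Young's inequality to $q_A\|(1-p)\mathbf{e}^t+\gamma\textstyle\sum_i I_i^t\mathbf{g}_i^t\|^2$ with weight $\frac{2\delta+1}{2q_A}-1$. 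This produces the factor $\frac{q_A(2\delta+1)}{2\delta+1-2q_A}$ and requires $q_A<\frac{2\delta+1}{2}$. The per-device errors satisfy their own recursion with rate $\rho_2 = 2(1-p)\delta+p<\rho_1$.

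Finally, we unroll both recursions from the zero initialization $\mathbf{e}_i^0=0$ and sum over $t=0,\dots,T$. Each geometric convolution is bounded by $\frac{1}{1-\rho}$ times the sum of its forcing terms. The cross-convolution of the $\rho_2$-sequence inside the $\rho_1$-recursion contributes $\frac{\rho_1}{\rho_1-\rho_2}$ times the corresponding forcing term, which gives the structure of (\ref{xi1})--(\ref{xi2}). We expect the main obstacle to be the cross term $\sum_i(I_i^t-(1-p))\mathbf{e}_i^t$, whose dependence on the individual errors forces this coupled two-sequence argument. We will first try to bound its conditional second moment by $p(1-p)\sum_i\|\mathbf{e}_i^t\|^2$, using independence of the $I_i^t$, and absorb it through the $\rho_2$-recursion. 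If the constants do not match, we will adjust the Young weights until they reproduce exactly the expressions in (\ref{xi1})--(\ref{xi2}).
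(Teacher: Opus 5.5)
There is a genuine gap at the step you call ``combining the two parts,'' and tuning Young weights cannot close it. After you apply Assumption~\ref{assp agg error} to the masked vectors $I_i^t\mathbf{u}_i^t$, you still have to control $\langle A,B\rangle$, where $A=\sum_i I_i^t[\mathbf{u}_i^t-\mathcal{C}(\mathbf{u}_i^t)]$ and $B=\sum_i(1-I_i^t)\mathbf{e}_i^t$. Assumption~\ref{assp agg error} bounds only the size of $A$, not its direction. You are therefore forced into $\|A+B\|^2\le(1+\eta)\|A\|^2+(1+\eta^{-1})\|B\|^2$, with $\mathbb{E}\|B\|^2=p^2\|\mathbf{e}^t\|^2+p(1-p)\sum_i\|\mathbf{e}_i^t\|^2$.

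With the weight you fixed for the $q_A$-term, the coefficient of $\|\mathbf{e}^t\|^2$ is at least $(1+\eta)\frac{2\delta+1}{2}(1-p)^2+(1+\eta^{-1})p^2$. A short computation shows that every $\eta$ keeping this at most $\rho_1$ leaves a coefficient of at least $(1-p)\rho_1$ on $\sum_i\|\mathbf{e}_i^t\|^2$. For instance, $\eta=p/(1-p)$ gives $\rho_1$ exactly, but $B$ alone then contributes $1-p$. The paper's coefficient is $(4\delta+2)p(1-p)$. Passed through your $\rho_2$-recursion, whose forcing contains $\gamma^2\beta^2\vartheta$, this yields a constant that does not vanish as $p\to0$ (for $\delta>0$), whereas $\xi_1$ in (\ref{xi1}) does vanish.

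The masking also causes a second problem, independent of any weights. Keeping $\sum_iI_i^t\mathbf{g}_i^t$ masked inside the $q_A$-term, whether through Lemma~\ref{lemma 1} or through the variance $p(1-p)\sum_i\|\mathbf{u}_i^t\|^2$, produces a constant of order $q_Ap\beta^2\vartheta/(1-p)$. But $\xi_1$ does not involve $q_A$ at all and is $O(\delta)$. So your route gives a bound of the same shape with different, generally larger, constants, not the stated lemma.

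The missing idea is to define the residual $\tilde{\mathbf{e}}_i^t=\mathbf{u}_i^t-\mathcal{C}(\mathbf{u}_i^t)$ for every device, stragglers included; it does not depend on $I_i^t$. Then write $\sum_i\mathbf{e}_i^{t+1}=\sum_iI_i^t(\tilde{\mathbf{e}}_i^t-\mathbf{e}_i^t)+\mathbf{e}^t$ and average over the straggler pattern exactly. Conditionally on $\mathcal{F}^t$ this gives
\[
\mathbb{E}\Bigl\|\sum_i\mathbf{e}_i^{t+1}\Bigr\|^2=(p-p^2)\sum_i\mathbb{E}\|\tilde{\mathbf{e}}_i^t-\mathbf{e}_i^t\|^2+\mathbb{E}\Bigl\|(1-p)\sum_i\tilde{\mathbf{e}}_i^t+p\,\mathbf{e}^t\Bigr\|^2 .
\]
Every per-device term now carries the factor $p(1-p)$. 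Convexity bounds the last term by $(1-p)\mathbb{E}\|\sum_i\tilde{\mathbf{e}}_i^t\|^2+p\|\mathbf{e}^t\|^2$.

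Assumption~\ref{assp agg error} is then applied to the unmasked sum, where $\gamma\sum_i\mathbf{g}_i^t=\gamma\nabla F(\boldsymbol{\theta}^t)/(1-p)$ holds exactly. The masked bound of Lemma~\ref{lemma 1} is therefore never needed, and no $q_A\beta^2$ constant appears. Only the bound on $\sum_i\|\mathbf{g}_i^t\|^2$ from its proof enters, and only in the per-device terms.

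From there, the following pieces reproduce (\ref{xi1})--(\ref{xi2}):
\begin{itemize}
\item the split $\|\tilde{\mathbf{e}}_i^t-\mathbf{e}_i^t\|^2\le2\|\tilde{\mathbf{e}}_i^t\|^2+2\|\mathbf{e}_i^t\|^2$;
\item Assumption~\ref{assp compre} for the per-device terms;
\item your $\rho_2$-recursion;
\item the choice $\alpha=2q_A/(2\delta+1-2q_A)$;
\item your unrolling, including the $\rho_1/(\rho_1-\rho_2)$ cross-convolution.
\end{itemize}
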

\begin{proof}
    Please see Appendix~\ref{appendix lemma error}. 
\end{proof}
Based on Lemma~\ref{lemma 1} and Lemma~\ref{lemma error}, the convergence performance of COCO-EF is characterized in Theorem 1. 
\begin{theorem}[Convergence performance of COCO-EF]
\label{convergence performance}
Based on Assumptions~\ref{assp smooth}-\ref{assp compre}, for $\delta  < 0.5$ and ${q_A} < \frac{{2\delta  + 1}}{2}$, with  a constant  learning rate $\gamma  = \frac{\phi }{{\sqrt {T + 1} }}$, $\phi>0$, for \(T > {\left( {{\varepsilon _0}\phi } \right)^2} - 1\), COCO-EF converges as
\begin{align}
    \label{conv}
    &\frac{1}{{T + 1}}\sum\limits_{t = 0}^T {\mathbb{E}\left[ {{{\left\| {\nabla F\left( {{{\boldsymbol{\theta }}^t}} \right)} \right\|}^2}} \right]} \nonumber\\
    \leqslant & \frac{{{\varepsilon _1}\phi }}{{\sqrt {T + 1}  - {\varepsilon _0}\phi }} + \frac{{ {F\left( {{{\boldsymbol{\theta}}^0}} \right)}  - {F^*}}}{{\phi \sqrt {T + 1}  - {\varepsilon _0}{\phi ^2}}},
\end{align}
where
\begin{align}
    \label{def varepsilon0}
  {\varepsilon _0} & \triangleq \frac{L}{2}\left[ {\frac{p}{{\left( {1 - p} \right)N}} + 1 + \frac{{2p\vartheta }}{{\left( {1 - p} \right){M^2}}}} \right] + \frac{{L{\xi _2}\beta \sqrt {p\vartheta } }}{{\sqrt {2{\xi _1}\left( {1 - p} \right)} }} \nonumber \\
   &+ \frac{{L\sqrt {{\xi _1}\left( {1 - p} \right)} }}{{2\sqrt {2p{\beta ^2}\vartheta } }}\left[ {\frac{p}{{\left( {1 - p} \right)N}} + 1 + \frac{{2p\vartheta }}{{\left( {1 - p} \right){M^2}}}} \right],\\
    \label{def varepsilon1}
    {\varepsilon _1} &\triangleq \sqrt {\frac{{2{L^2}{\xi _1}p{\beta ^2}\vartheta }}{{1 - p}}}  + \frac{{Lp{\beta ^2}\vartheta }}{{1 - p}},
\end{align}
 and $T$ is the total number of iterations. 
\end{theorem}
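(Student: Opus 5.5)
We will use the standard virtual-iterate (perturbed sequence) argument for error feedback. Define
\[
\tilde{\boldsymbol{\theta}}^t \triangleq \boldsymbol{\theta}^t - \sum_{i=1}^N \mathbf{e}_i^t .
\]
Combine (\ref{server agg}), (\ref{update model}) and (\ref{error update}). For a non-straggler, $\hat{\mathbf{g}}_i^t = \gamma\mathbf{g}_i^t + \mathbf{e}_i^t - \mathbf{e}_i^{t+1}$. For a straggler, $\mathbf{e}_i^{t+1}=\mathbf{e}_i^t$. Together these give the clean recursion
\[
\tilde{\boldsymbol{\theta}}^{t+1} = \tilde{\boldsymbol{\theta}}^t - \gamma\sum_{i=1}^N I_i^t \mathbf{g}_i^t .
\]
By (\ref{encoding}) and the allocation, $\mathbb{E}[\sum_i I_i^t\mathbf{g}_i^t|\mathcal{F}^t] = \sum_k \frac{(1-p)d_k}{d_k(1-p)}\nabla f_k(\boldsymbol{\theta}^t)=\nabla F(\boldsymbol{\theta}^t)$. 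So the virtual sequence performs an unbiased SGD-type step whose gradient is evaluated at the true iterate $\boldsymbol{\theta}^t$, not at $\tilde{\boldsymbol{\theta}}^t$.

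Apply the smoothness bound (\ref{smooth}) to $F(\tilde{\boldsymbol{\theta}}^{t+1})$ and take the conditional expectation:
\[
\mathbb{E}F(\tilde{\boldsymbol{\theta}}^{t+1}) \le F(\tilde{\boldsymbol{\theta}}^t) - \gamma\langle\nabla F(\tilde{\boldsymbol{\theta}}^t),\nabla F(\boldsymbol{\theta}^t)\rangle + \tfrac{L\gamma^2}{2}\mathbb{E}\Big\|\sum_i I_i^t\mathbf{g}_i^t\Big\|^2 .
\]
Bound the quadratic term directly by Lemma~\ref{lemma 1}; this produces the first bracket in $\varepsilon_0$ and the $Lp\beta^2\vartheta/(1-p)$ part of $\varepsilon_1$. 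For the inner product, write $\nabla F(\tilde{\boldsymbol{\theta}}^t)=\nabla F(\boldsymbol{\theta}^t)+[\nabla F(\tilde{\boldsymbol{\theta}}^t)-\nabla F(\boldsymbol{\theta}^t)]$. Then use (\ref{smooth L}) and Young's inequality with a free parameter $\rho>0$:
\[
-\gamma\langle\nabla F(\tilde{\boldsymbol{\theta}}^t),\nabla F(\boldsymbol{\theta}^t)\rangle \le -\gamma\|\nabla F(\boldsymbol{\theta}^t)\|^2 + \tfrac{\gamma L\rho}{2}\|\nabla F(\boldsymbol{\theta}^t)\|^2 + \tfrac{\gamma L}{2\rho}\Big\|\sum_i\mathbf{e}_i^t\Big\|^2 .
\]
This choice (rather than the usual $\frac{\gamma}{2}$ splitting) keeps the coefficient of $\|\nabla F(\boldsymbol{\theta}^t)\|^2$ exactly $-\gamma$ plus $O(\gamma\rho)$ and $O(\gamma^2)$ corrections. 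That matches the form of (\ref{conv}), whose denominator is $\phi\sqrt{T+1}-\varepsilon_0\phi^2$.

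Next sum over $t=0,\dots,T$ and telescope, using $\tilde{\boldsymbol{\theta}}^0=\boldsymbol{\theta}^0$ (since $\mathbf{e}_i^0=0$) and Assumption~\ref{assp lower bound}. Bound $\sum_t\mathbb{E}\|\sum_i\mathbf{e}_i^t\|^2$ by Lemma~\ref{lemma error}; the index shift is harmless because $\mathbf{e}^0=0$. This gives
\[
\Big(\gamma-\tfrac{\gamma L\rho}{2}-\tfrac{L\gamma^3\xi_2}{2\rho}-\tfrac{L\gamma^2}{2}c\Big)\sum_t\mathbb{E}\|\nabla F(\boldsymbol{\theta}^t)\|^2 \le F(\boldsymbol{\theta}^0)-F^* + (T+1)\Big[\tfrac{L\gamma^3\xi_1}{2\rho}+\tfrac{L\gamma^2 p\beta^2\vartheta}{1-p}\Big],
\]
where $c$ is the bracket of Lemma~\ref{lemma 1}. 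Now choose $\rho$ proportional to $\gamma$, namely $\rho=\gamma\sqrt{\xi_1(1-p)/(2p\beta^2\vartheta)}$ or a similar constant ratio. With this choice every correction is $O(\gamma^2)$. The $\xi_1$ term becomes $\gamma^2\sqrt{\cdot}$, giving the square-root term of $\varepsilon_1$. The $\xi_2$ and $\rho$ terms give the second and third summands of $\varepsilon_0$, with its $\sqrt{\xi_1(1-p)}/\sqrt{2p\beta^2\vartheta}$ factors.

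Finally divide by $(T+1)\gamma(1-\varepsilon_0\gamma)$ and substitute $\gamma=\phi/\sqrt{T+1}$. The condition $T>(\varepsilon_0\phi)^2-1$ is exactly $1-\varepsilon_0\gamma>0$, and the result is (\ref{conv}). The main obstacle is matching constants: the Young parameter must balance the $\xi_1$ and $\xi_2$ contributions so that they reproduce the stated $\varepsilon_0,\varepsilon_1$. I would first try $\rho=\gamma\sqrt{\xi_1(1-p)/(2p\beta^2\vartheta)}$ and adjust by factors of 2 if the coefficients do not line up. A secondary care point is that Lemma~\ref{lemma 1} is conditional on $\mathcal{F}^t$ while $\boldsymbol{\theta}^t$ is $\mathcal{F}^t$-measurable, so the tower property is needed before summing.
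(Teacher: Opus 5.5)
Your argument is correct, but you handle the cross term differently from the paper, and this changes the constants you get. The paper keeps the random direction. It applies Young's inequality to $\gamma\langle\nabla F(\boldsymbol{\theta}^t)-\nabla F(\mathbf{x}^t),\sum_i I_i^t\mathbf{g}_i^t\rangle$ before averaging over stragglers, so Lemma 1 is paid a second time, with weight $\frac{\gamma}{2\rho}=\frac{\gamma^2\rho_0}{2}$ (where $\rho=1/(\gamma\rho_0)$). This second copy carries the variance term $\frac{2p\beta^2\vartheta}{1-p}$ into $\tilde\varepsilon_1=\frac{L^2\xi_1}{2\rho_0}+(\frac{L}{2}+\frac{\rho_0}{2})\frac{2p\beta^2\vartheta}{1-p}$. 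Balancing the two $\rho_0$-dependent terms at $\rho_0=Lr$, with $r\triangleq\sqrt{\xi_1(1-p)/(2p\beta^2\vartheta)}$, is exactly what produces the square root in $\varepsilon_1$ and the factor $c$ in the third summand of $\varepsilon_0$.

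You average over the stragglers first, so your cross term pairs the mismatch with the deterministic $\nabla F(\boldsymbol{\theta}^t)$. It costs only $\frac{\gamma L\rho}{2}\|\nabla F(\boldsymbol{\theta}^t)\|^2$, so there is no second variance term and nothing to balance in your $\varepsilon_1$. Consequently no choice of $\rho$ reproduces $\varepsilon_0,\varepsilon_1$ verbatim, and you should not tune factors of 2. Halving $\rho$ to match the square-root term doubles your $\xi_2$ contribution, after which $\varepsilon_0$ is no longer guaranteed to dominate.

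Keep your first choice $\rho=\gamma r$. It gives $\hat\varepsilon_0=\frac{L}{2}c+\frac{L\xi_2}{2r}+\frac{Lr}{2}\le\varepsilon_0$, since the middle term equals the paper's second summand and $c\ge 1$. It also gives $\hat\varepsilon_1=\frac{L\xi_1}{2r}+\frac{Lp\beta^2\vartheta}{1-p}=\frac{1}{2}\sqrt{\frac{2L^2\xi_1p\beta^2\vartheta}{1-p}}+\frac{Lp\beta^2\vartheta}{1-p}\le\varepsilon_1$. Under the condition on $T$ you have $1-\hat\varepsilon_0\gamma\ge 1-\varepsilon_0\gamma>0$, and $\frac{\hat\varepsilon_1\gamma^2+(F(\boldsymbol{\theta}^0)-F^*)/(T+1)}{\gamma(1-\hat\varepsilon_0\gamma)}$ is increasing in both constants. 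The stated bound therefore follows by domination rather than by matching constants.

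Your route gives a slightly sharper result from a cleaner cross-term estimate. The paper's looser Young split is what yields the displayed constants exactly.
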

\begin{proof}
    Please see Appendix~\ref{proof th}. 
\end{proof}
 Note that the condition in Theorem~\ref{convergence performance}, i.e., $\delta < 0.5$ and $q_A < \frac{2\delta + 1}{2}$, indicates that the information loss caused by compression is upper-bounded. These conditions can be readily satisfied by adjusting the compression level of the compression functions. This can be achieved by tuning the parameters of the compression functions, such as $K$ in top-$K$ sparsification and $M_0$ together with the group size in grouped sign-bit quantization.

In Theorem~\ref{convergence performance}, (\ref{conv}) can be rewritten as 
    \begin{align}
        \label{eq O}
 & \frac{1}{{T + 1}}\sum\limits_{t = 0}^T {\mathbb{E}\left[ {{{\left\| {\nabla F\left( {{{\boldsymbol{\theta }}^t}} \right)} \right\|}^2}} \right]}  \nonumber \\
   = & O\left[ {\frac{{{\varepsilon _1}\phi }}{{\sqrt {T + 1} }} + \frac{{F\left( {{{\boldsymbol{\theta }}^0}} \right) - {F^*}}}{{\phi \sqrt {T + 1} }}} \right],
    \end{align}
where a smaller value of $\varepsilon _1$ indicates a better learning performance. Based on the expression of $\varepsilon _1$ provided in (\ref{def varepsilon1}), we can make the following observations:
\begin{itemize}
    \item For a smaller value of the straggler probability \( p \), indicating that devices are less likely to become stragglers, the value of \( \varepsilon_1 \) decreases, improving the learning performance. This aligns with our intuition that better learning performance can be achieved with less missing information from stragglers.
    \item If we increase the values of \(\left\{ {d_k, k = 1, \ldots, M} \right\}\), implying that the training subsets are allocated to devices more redundantly, the value of \( \vartheta \) in (\ref{def vartheta}) decreases, leading to a reduction in \( \varepsilon_1 \). In this case, learning performance improves at the cost of increased computational and storage burdens on the devices. 
    \item With a smaller value of \( L \), indicating that the training loss function is smoother, the value of \( \varepsilon_1 \) decreases.  Noting that the value of $L$ affects only the term $\varepsilon_1$ on the right-hand side of (\ref{eq O}), a smaller value of $L$ implies improved learning performance.  The rationale is as follows. In COCO-EF, information on the correct direction of the model update, contained in the error vectors, is utilized in a delayed manner in later iterations. For a smoother loss function, the gradients change more gradually, making the delayed information in the error vectors less outdated, which improves learning performance more effectively. 
    \item As the value of \( \beta \) decreases, the value of \( \varepsilon_1 \) also decreases. In other words, if the training data are divided into more homogeneous subsets, the learning performance can be improved. The rationale is that when subsets are more similar, missing information from stragglers can be more effectively compensated by information from non-stragglers, given that the local gradients computed from different subsets are more similar.
    \item When \( \delta = 0 \), corresponding to the case without communication compression, \( \varepsilon_1 = 0 \), and the first term on the right-hand side of (\ref{conv}) equals zero. In this case, the learning performance represents the optimal performance bound for COCO-EF.
\end{itemize}

\section{Numerical Results}
\label{simulations}
In this section, the performance of COCO-EF is evaluated on different learning tasks.
\subsection{Linear regression task}
We consider a linear regression task with a synthetic dataset. Suppose there are \(N=100\) devices and 
the overall dataset \(\mathcal{W}\) is divided into \(M=100\) subsets, where each subset 
\(\mathcal{W}_k\) contains a single data sample \(\left\{ {{{\mathbf{z}}_k},{y_k}} \right\}\). 
Here, \(y_k \in {\mathbb{R}}\) represents the label of \({{\mathbf{z}}_k} \in {\mathbb{R}^{100}}\). The elements in \( \left\{ \mathbf{z}_1, \dots, \mathbf{z}_M \right\} \) are sampled independently from the normal distribution \( \mathcal{N}(0, 100) \). To generate each \( y_k \), we first create a random vector \( \hat{\boldsymbol{\theta}} \) consisting of 100 elements drawn from a standard normal distribution. Consequently, each \( y_k \) is generated according to the distribution \( y_k \sim \mathcal{N}(\langle \mathbf{z}_k, \hat{\boldsymbol{\theta}} \rangle, 1) \),  \( \forall k \).
Based on that, the loss function is given as 
\begin{align}
\label{LR loss}
F\left( {\boldsymbol{\theta}} \right)&= \sum\limits_{k = 1}^M {{f_k}\left( {\boldsymbol{\theta}} \right)},\nonumber\\
{f_k}\left( {\boldsymbol{\theta}} \right)& = \frac{1}{2}{\left( {\left\langle {{\boldsymbol{\theta}},{{\mathbf{z}}_k}} \right\rangle  - {y_k}} \right)^2},
\end{align}
where \(k=1,...,M\), \(\boldsymbol{\theta} \in {\mathbb{R}^{100}}\). 

Before the training starts, we allocate the training subsets uniformly and randomly to all devices, in which case each subset $\mathcal{W}_k$ is allocated to \(d_k\) devices, $\forall k$. This can be regarded as an approximation of training data allocation in a pairwise balanced scheme, as pointed out in \cite{bitar2020stochastic}. The parameter vector is initialized as \( {\boldsymbol{\theta}}^0 \), where each element is drawn from the standard normal distribution independently. 

First, to verify the superiority of the proposed method, we compare the performance of the following methods: 
\begin{enumerate}
    \item \textbf{COCO-EF (Sign):} The proposed method that uses sign-bit quantization as the compression function. In this setting, sign-bit quantization reduces to the grouped sign-bit quantization with $M_0=1$, as described in Section~\ref{our method}.
    \item \textbf{COCO-EF (Top-$K$):} The proposed method that uses top-$K$ sparsification described in Section~\ref{our method} as the compression function. 
    \item \textbf{Unbiased (Sign):} The 1-bit gradient coding method proposed in \cite{chengxi20241-BitGC}, which combines the advantages of gradient coding and 1-bit stochastic quantization to deal with the stragglers with communication bottleneck in DL. 
   \item \textbf{Unbiased (Rand-$K$):} This method extends the method proposed in \cite{chengxi20241-BitGC} by using amplified rand-$K$ sparsification as the compression function in place of 1-bit stochastic quantization.
   \item   \textbf{Unbiased-diff (Sign):} This method combines Unbiased (Sign) with gradient-difference compression \cite{stich2020communication} to compensate for the compression error. 
   \item   \textbf{Unbiased-diff (Rand-$K$):} This method combines Unbiased (rand-$K$) with gradient-difference compression \cite{stich2020communication} to compensate for the compression error. 
\end{enumerate}
\begin{figure}[tb]
    \centering
        \includegraphics[width=0.8\linewidth]{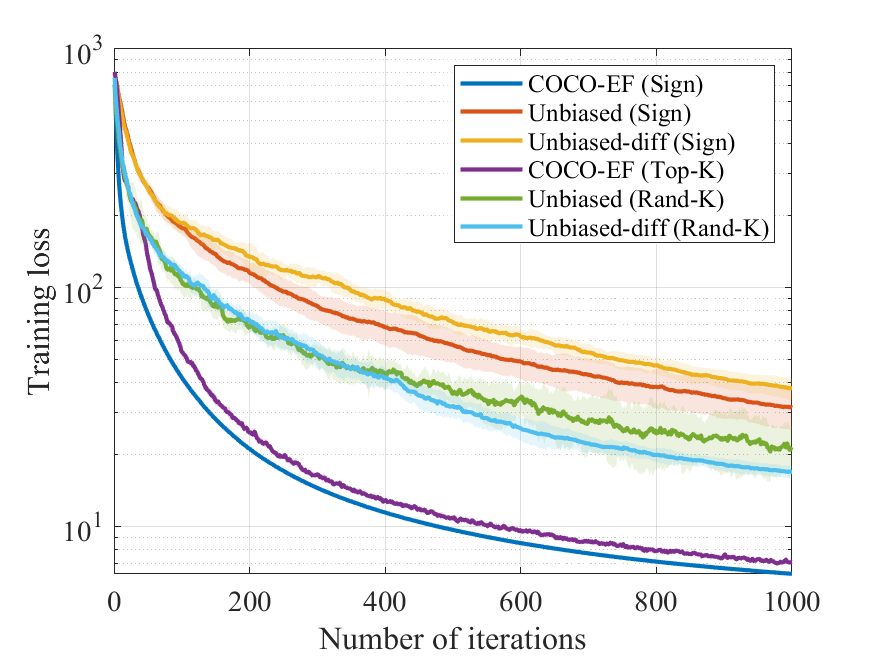}
    \caption{Training loss as a function of the number of iterations for COCO-EF and the baselines with various compression functions. For each method, we run 5 independent trials. The solid curve shows the mean training loss as a function of the number of iterations, and the shaded region represents the standard deviation across trials.}
    \label{fig: biased_unbiased}
    %\vskip 0.2in
\end{figure}
In the methods above, COCO-EF (Sign) and COCO-EF (Top-\(K\)) are different realizations of the proposed method, using two types of biased compression functions. Unbiased (Sign) and Unbiased (Rand-\(K\)) are baseline methods designed to address the same problem considered in this paper, but they use unbiased compression functions for communication compression \cite{chengxi20241-BitGC}.
 Unbiased-diff (Sign) and Unbiased-diff (Rand-\(K\)) are two baseline methods that combine Unbiased (Sign) and Unbiased (Rand-\(K\)) with gradient-difference compression in \cite{stich2020communication}, respectively, in order to mitigate the negative impact of the compression error on the learning performance. 
  Note that in COCO-EF (Sign), Unbiased (Sign) and Unbiased-diff (Sign), 1-bit vectors are transmitted from the non-stragglers to the server, while in COCO-EF (Top-\(K\)), Unbiased (Rand-\(K\)) and Unbiased-diff (Rand-\(K\)), sparsified vectors are sent by the non-stragglers. Based on this, it is rational to compare the learning performance among COCO-EF (Sign), Unbiased (Sign) and Unbiased-diff (Sign), as well as among COCO-EF (Top-\(K\)), Unbiased (Rand-\(K\)) and Unbiased-diff (Rand-\(K\)), to demonstrate the performance gain of the proposed method resulting from the use of biased compression functions in gradient coding.  
 Here, we only compare the proposed method with different realizations of the baseline method in \cite{chengxi20241-BitGC} and omit the comparison with the method in \cite{bitar2020stochastic} mentioned in Section~\ref{introduction}. This is because the superiority of the method in \cite{chengxi20241-BitGC} over \cite{bitar2020stochastic} has already been thoroughly demonstrated in \cite{chengxi20241-BitGC}. Based on this, it is sufficient for the present work to show the superiority of our method over \cite{chengxi20241-BitGC} in order to demonstrate its value over all these prior works. 
In Fig.~\ref{fig: biased_unbiased}, we plot the training loss as a function of the number of iterations, where we set $d_k=5, \forall k$, $p=0.2$ and $K=2$. The learning rate of COCO-EF is fixed at $\gamma=10^{-5}$ and the learning rates of the baselines are fine-tuned to be $\gamma=2\times10^{-6},10^{-5},2\times10^{-6},6\times10^{-6}$ for Unbiased (Sign), Unbiased (Rand-\(K\)), Unbiased-diff (Sign), and Unbiased-diff (Rand-\(K\)). 
 Here, when selecting the learning rate for each method, we fine-tune this parameter to ensure that the best possible performance is achieved for that method for a fair comparison.  
 From Fig.~\ref{fig: biased_unbiased}, we observe that the proposed method achieves a lower training loss after the same number of iterations. Since the communication overhead per iteration is identical among COCO-EF (Sign), Unbiased (Sign) and Unbiased-diff (Sign), as well as among COCO-EF (Top-$K$), Unbiased (Rand-$K$) and Unbiased-diff (Rand-$K$), it is clear that the proposed method outperforms the baselines, regardless of whether 1-bit messages or sparsified messages are used. In other words, the proposed method delivers better learning performance under the same communication overhead in the presence of stragglers. This demonstrates that biased compression functions are superior to unbiased compression functions for communication compression in gradient coding for DL, under communication bottlenecks with stragglers. 
 Also note that, under the parameter settings used here and according to Proposition~\ref{prop2}, the condition $\delta < 0.5$ is not satisfied in our experiments. However, the proposed method still converges empirically under the experimental configurations we consider. This suggests that the theoretical requirement in Theorem~1 is a sufficient but not a necessary condition for convergence. 

 It is worth noting that when unbiased compression is adopted in the baseline methods, some existing techniques can potentially be used to compensate for the compression error, such as error feedback and gradient-difference compression. However, it has been demonstrated in \cite{horvath2020better} that error feedback provides little improvement in learning performance when used with unbiased compression in DL settings. To further verify this behavior in our context, we conducted additional simulations by combining Unbiased (Sign) and Unbiased (Rand-\(K\)) with error feedback for the considered problem. The results show that the method barely converges under this configuration. Therefore, we do not include these simulation results in the paper.
Based on these observations, it becomes clear that the performance gain of the proposed method arises from the combination of biased compression and error feedback, rather than from the advantage of error feedback alone. In other words, the benefits of error feedback are effective only when used with biased compression, and can therefore be regarded as part of the overall advantage of biased compression in the proposed method.
In addition, our experiments show that, in the baseline, the performance gain brought by gradient-difference compression is negligible, and the resulting learning performance is inferior to that of the proposed method. This is due to the inherent drawback of unbiased compression in DL settings—its average ability to retain information is lower than that of biased compression \cite{beznosikov2023biased}.

\begin{figure}[tb]
    \centering
        \includegraphics[width=0.7\linewidth]{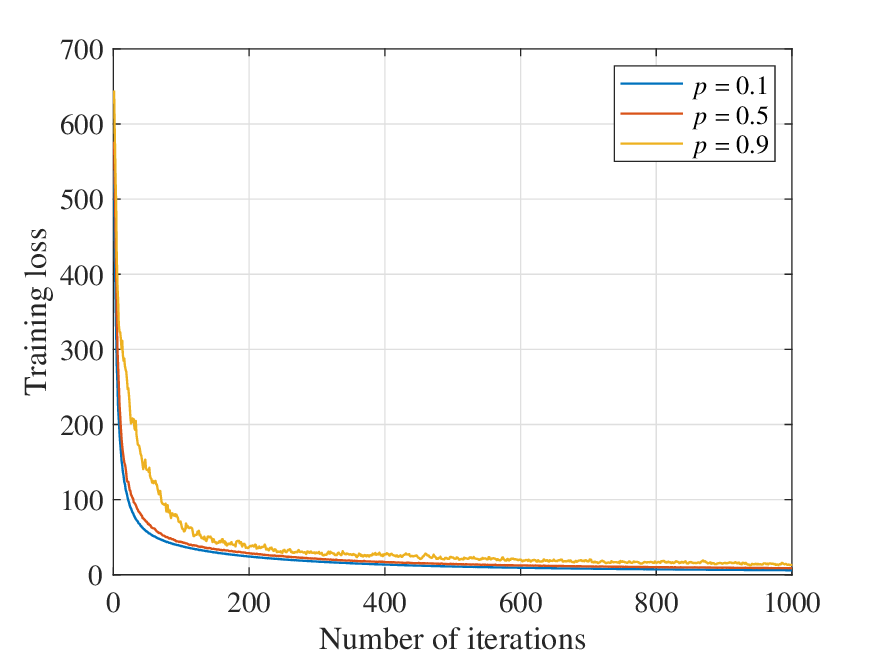}
    \caption{Training loss as a function of the number of iterations for COCO-EF (Sign) under varying values of $p$.}
    \label{fig: varying p}
\end{figure}

Next, we investigate the influence of the straggler probability $p$ on the learning performance of the proposed method. To this end, in Fig.~\ref{fig: varying p}, we plot the training loss as a function of the number of iterations under different values of $p$, where COCO-EF (Sign) is implemented. Here we fix $d_k=2, \forall k$, and $\gamma=10^{-5}$. From Fig.~\ref{fig: varying p}, we observe that with a larger value of \( p \), indicating that devices are more likely to be stragglers, the learning performance degrades to some extent. This aligns with the theoretical analysis in Section~\ref{performance analysis} and our intuition that more stragglers lead to more missing information, which cannot be utilized by the server to update the global model effectively and accurately. However, performance degradation only becomes noticeable when \( p \) is fairly large (i.e., close to 1), which can be attributed to the gradient coding scheme adopted in COCO-EF based on a redundant allocation of training data across devices, enhancing robustness against stragglers. 

\begin{figure}[tb]
    \centering
    \includegraphics[width=0.7\linewidth]{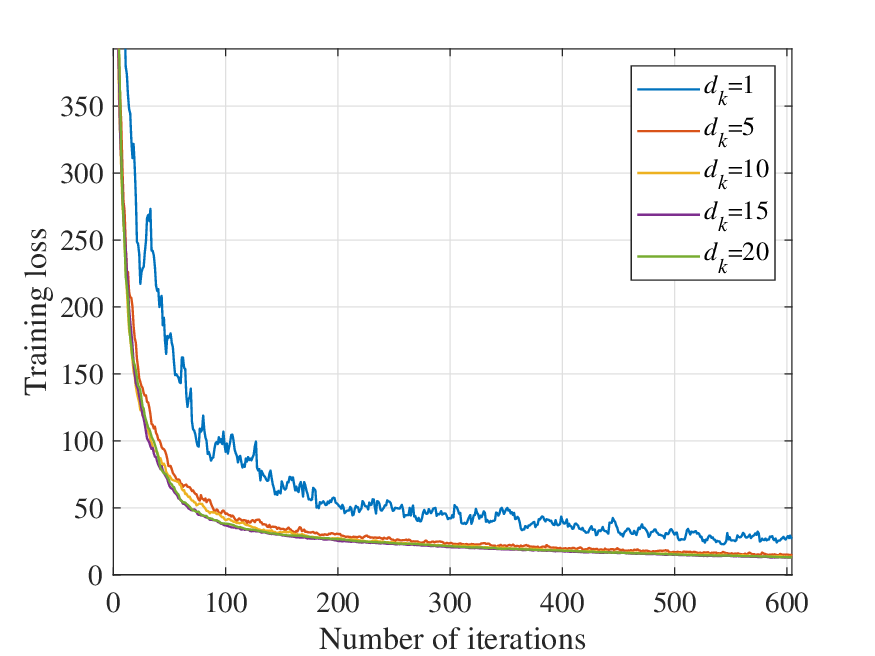}
    \caption{Training loss as a function of the number of iterations for COCO-EF (Sign) under varying values of $d_k$.}
    \label{fig: varying dk}
\end{figure}

\begin{figure}
    \centering
        \includegraphics[width=0.7\linewidth]{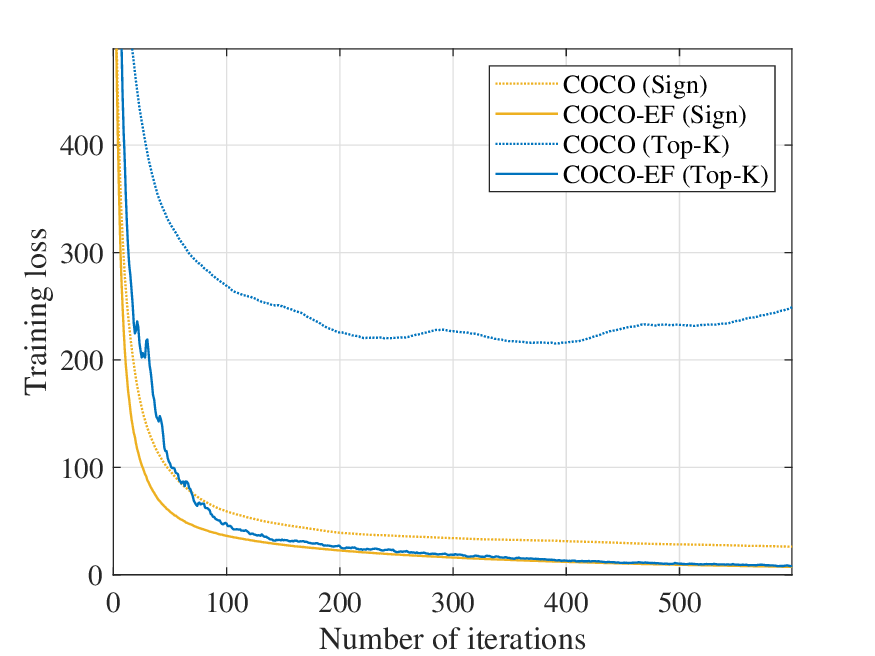}
    \caption{Training loss as a function of the number of iterations for COCO-EF and COCO.}
    \label{fig: ef}
\end{figure}

To illustrate the influence of data allocation redundancy on robustness to stragglers in COCO-EF, Fig.~\ref{fig: varying dk} shows the training loss as a function of the number of iterations for COCO-EF (Sign) under varying values of \( d_k \), with \( p = 0.9 \) and \( \gamma = 10^{-5} \). From Fig.~\ref{fig: varying dk}, it is evident that as \( d_k \) increases, indicating greater redundancy in data allocation, the learning performance of COCO-EF improves, which aligns with our analysis in Section~\ref{performance analysis}. The rationale is that with increased redundancy, the missing information from stragglers can be compensated to a larger extent by information from non-stragglers, enhancing learning performance. It is important to note that improving learning performance by increasing redundancy comes at the cost of greater computational and communication burdens on devices, and this trade-off should be considered in practice. Additionally, from Fig.~\ref{fig: varying dk}, we observe that increasing \( d_k \) from 1 to 10 significantly improves learning performance. However, further increases in \( d_k \) beyond 10 result in only marginal improvements. Note that by setting \( d_k=M \), the effects of stragglers can be completely mitigated, since each device holds a copy of the overall training dataset. This suggests that the proposed method can effectively mitigate the impact of stragglers with a reasonable level of data allocation redundancy. Even when the goal is to fully counteract the effects of stragglers, only a relatively low level of redundancy is necessary.

To verify the necessity of adopting the error feedback mechanism in the proposed method, Fig.~\ref{fig: ef} shows the training loss as a function of the number of iterations for COCO-EF (Sign), COCO (Sign), COCO-EF (Top-\(K\)), and COCO (Top-\(K\)), where we set \( K = 2 \), \( p = 0.2 \), \( d_k = 5 \), and \( \gamma = 10^{-5} \). Here, COCO represents a version of the proposed method without the error feedback mechanism, implemented by fixing \(\left\{ {\mathbf{e}}_i^t = \mathbf{0}, \forall i, \forall t \right\}\). COCO-EF (Sign) and COCO (Sign) use the same compression function, as do COCO-EF (Top-\(K\)) and COCO (Top-\(K\)).
From Fig.~\ref{fig: ef}, we observe that the learning performance of COCO-EF (Sign) is significantly better than that of COCO (Sign). Similarly, COCO-EF (Top-\(K\)) outperforms COCO (Top-\(K\)), with COCO (Top-\(K\)) particularly struggling to converge, while the proposed method consistently converges to the stationary point. This suggests that adopting the error feedback mechanism in the proposed method is essential to ensure convergence and accelerate it by compensating for the compression bias, thus maintaining the correct direction to update the global model.

In our theoretical analysis and in the numerical settings above, we consider the use of a constant learning rate. However, we are also interested in comparing the learning performance of the proposed method under constant and decaying learning-rate schemes. To assess the influence of these different learning-rate settings, Fig.~\ref{fig: decaying lr} plots the training loss as a function of the number of iterations, where COCO-EF (Sign) is implemented. Here, we fix $p = 0.5$ and $d_k = 2$, $\forall k$. Under the constant learning-rate setting, the learning rate is fixed at $\gamma = 2 \times 10^{-5}$. Under the decaying learning-rate setting, the learning rate at iteration $t$ is set to $\gamma^t = \frac{2 \times 10^{-5}}{\sqrt{t+1}}$. This ensures a fair comparison between the constant and decaying schemes, as both begin with the same initial learning rate. 

From Fig.~\ref{fig: decaying lr}, it can be seen that the learning performance is significantly better under the constant learning-rate setting. The rationale behind this phenomenon is as follows. From (\ref{compress}), the encoded vector is scaled by the learning rate at the current iteration $t$. In contrast, the error term in (\ref{compress}) is accumulated from previous iterations, which was scaled by the learning rates used in those earlier iterations. Since the learning rates in the previous iterations are larger than the current learning rate under the decaying learning-rate scheme, the error term in (\ref{compress}) becomes more dominant than the newly encoded vector. This dominance may negatively affect the learning process, as the system tends to overemphasize compensating for past errors rather than striking an appropriate balance between correcting past errors and incorporating the current encoded vectors derived from the latest gradients.

\begin{figure}
    \centering
        \includegraphics[width=0.7\linewidth]{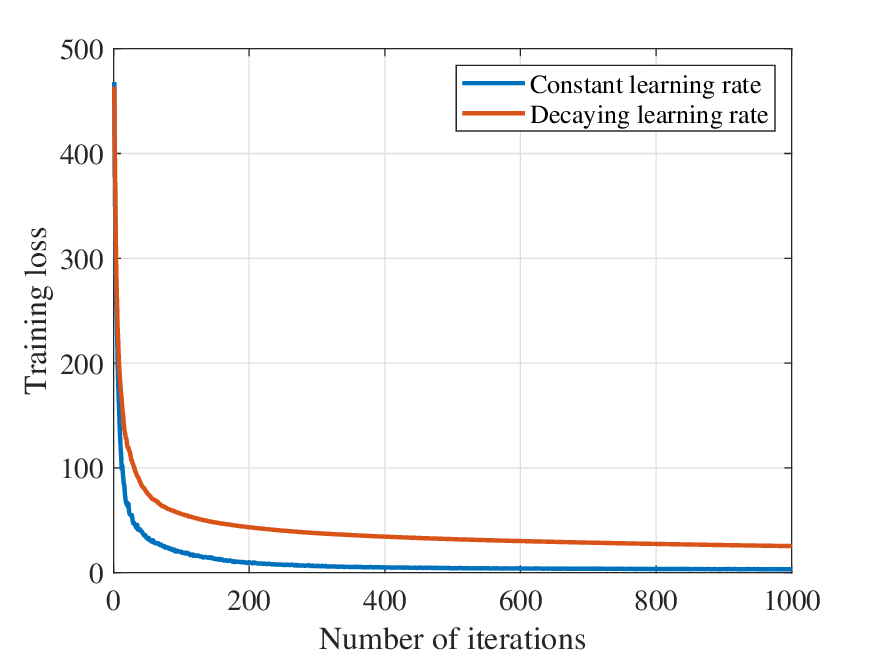}
    \caption{{Training loss as a function of the number of iterations for COCO-EF (Sign) under constant and decaying learning-rate schemes.}}
    \label{fig: decaying lr}
\end{figure}
 
\subsection{Image classification task}
In this subsection, we evaluate the performance of COCO-EF on an image classification task using the MNIST dataset \cite{lecun1998mnist}, where a convolutional neural network is trained. The training loss is defined as the cross-entropy loss, which is a non-convex function. We consider a DL system with \(N = 100\) devices. The MNIST training set contains 60{,}000 samples representing 10 handwritten digits, which are randomly divided into \(M = 100\) subsets without overlap such that each subset contains the same number of samples. To simulate heterogeneity among the subsets, all samples within each subset represent the same digit, while the digits across different subsets may differ. Before training begins, the subsets are allocated to the devices uniformly and randomly; as a result, each subset is assigned to \(d_k\) devices for all \(k\). The straggler probability is fixed at \(p = 0.6\). The MNIST test set is used to evaluate the test loss and test accuracy of the trained model. It contains 10{,}000 samples representing 10 handwritten digits.

To demonstrate the superiority of the proposed method, we compare the performance of COCO-EF (Sign) with that of Unbiased (Sign). In Fig.~\ref{fig:mnist}, we plot the training loss, training accuracy, test loss, and test accuracy as functions of the number of iterations for both methods under varying values of \(d_k\). The learning rates of COCO-EF (Sign) and Unbiased (Sign) are all fine-tuned to be nearly optimal. Note that, for both methods, the communication overhead per iteration is the same. Therefore, comparing their learning performance under the same number of iterations provides a fair comparison under equal communication overhead in the presence of stragglers. From Fig.~\ref{fig:mnist}, it can be seen that the proposed method outperforms the baseline method in all considered settings, demonstrating superior communication efficiency and robustness to stragglers. It can also be observed that the learning performance improves as the value of \(d_k\) increases. This observation aligns with our intuition that increasing data allocation redundancy enhances robustness to stragglers by compensating for the information loss caused by straggling devices through the messages received from non-straggling devices.
 
\begin{figure*}
    \centering
        \includegraphics[width=0.7\linewidth]{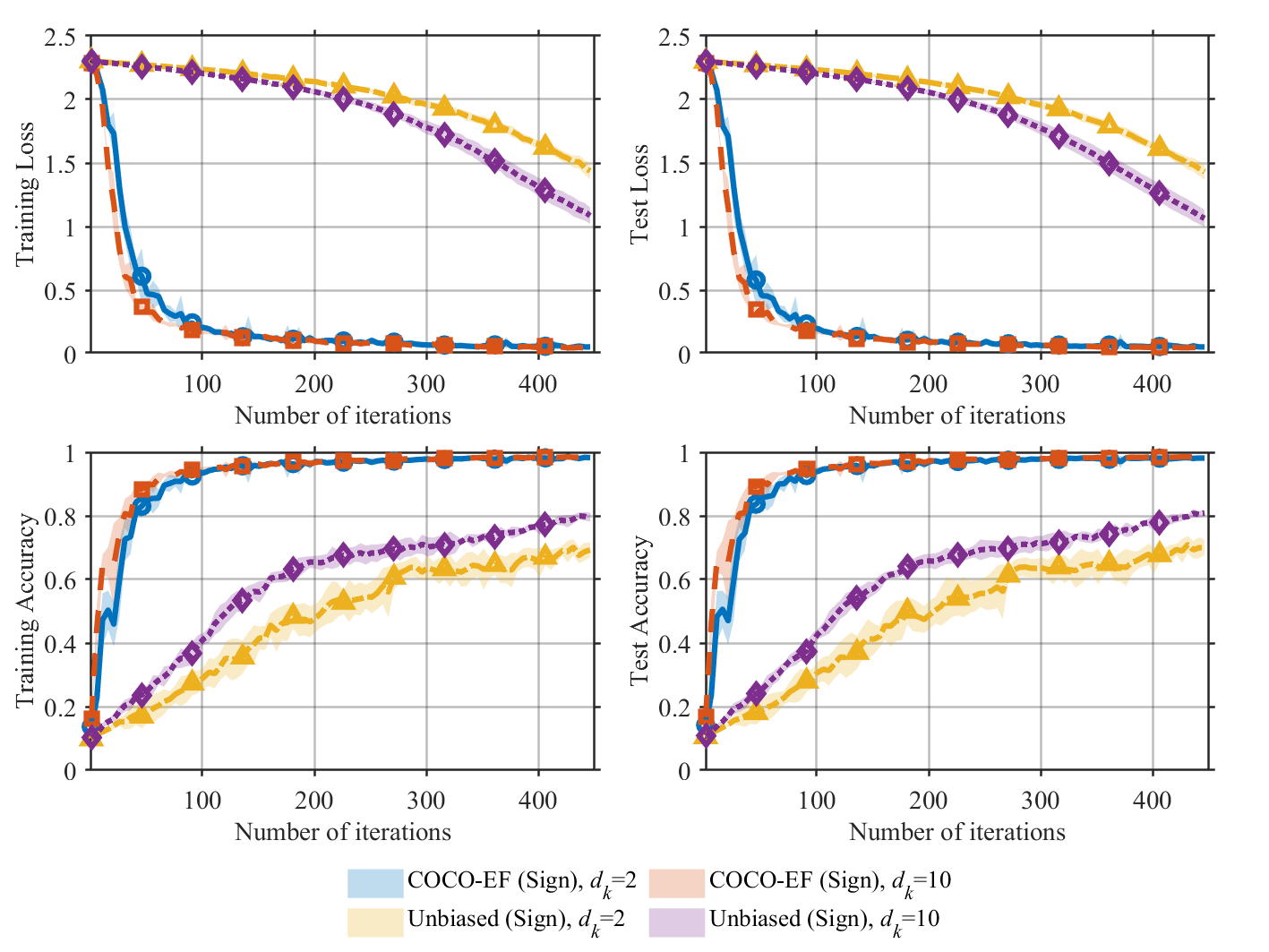}
    \caption{{Training loss, training accuracy, test loss, and test accuracy as functions of the number of iterations for COCO-EF (Sign) and Unbiased (Sign) under different values of \(d_k\). For each method, we conduct 5 independent trials. The solid curves show the mean performance across trials, and the shaded regions represent the corresponding standard deviations.}}
    \label{fig:mnist}
\end{figure*}
\section{Conclusions}
\label{conclusions} 
In this paper, we studied the DL problem with stragglers under the communication bottleneck. We proposed COCO-EF, which combines the benefits of biased compression functions with error feedback and gradient coding. In COCO-EF, each non-straggler device, holding redundantly allocated training data, encodes its local gradients into a single coded vector. This coded vector is incorporated with the compression error before being further compressed and transmitted to the server with a specified biased compression function. With the received vectors and the redundancy in training data allocation, the server can approximately reconstruct the global gradient to update the model. We analyzed the convergence performance of COCO-EF theoretically and presented numerical results, demonstrating its superior learning performance compared to the baselines under the same communication overhead with stragglers.  In our experiments, the number of devices is set to 100, which is practical in many real-world DL settings, such as edge computing systems with tens to hundreds of edge nodes~\cite{chen2025decentralized,tang2025verifiable}. In the future, we plan to apply the proposed method to more complex applications with a substantially larger number of devices in very large-scale learning systems.  In addition, we also plan to explore more advanced error feedback mechanisms to further enhance the performance of the proposed method, such as the one designed in \cite{richtarik2021ef21}.

\bibliographystyle{IEEEtran}   
\bibliography{reference}

%%%%%%%%%%%%%%%%%%%%%%%%%%%%%%%%%%%%%%%%%%%%%%%%%%%%%%%%%%%%%%%%%%%%%%%%%%%%%%%
%%%%%%%%%%%%%%%%%%%%%%%%%%%%%%%%%%%%%%%%%%%%%%%%%%%%%%%%%%%%%%%%%%%%%%%%%%%%%%%
% APPENDIX
%%%%%%%%%%%%%%%%%%%%%%%%%%%%%%%%%%%%%%%%%%%%%%%%%%%%%%%%%%%%%%%%%%%%%%%%%%%%%%%
%%%%%%%%%%%%%%%%%%%%%%%%%%%%%%%%%%%%%%%%%%%%%%%%%%%%%%%%%%%%%%%%%%%%%%%%%%%%%%%
\section{Acknowledgment}
The authors gratefully acknowledge Martin Jaggi who suggested to look at error feedback in the setting of this work.

\appendices
\section{Proof of Lemma \ref{lemma 1}}
\label{appendix lemma 1}
We can easily derive
\begin{align}
    \label{lemma 1 1}
  &\mathbb{E}\left[ {\left. {{{\left\| {\sum\limits_{i = 1}^N {I_i^t{\mathbf{g}}_i^t} } \right\|}^2}} \right|{\mathcal{F}^t}} \right] \nonumber \\
   =& \sum\limits_{j = 1}^N {\sum\limits_{i = 1}^N {\mathbb{E}\left[ {\left. {\left\langle {I_i^t{\mathbf{g}}_i^t,I_j^t{\mathbf{g}}_j^t} \right\rangle } \right|{\mathcal{F}^t}} \right]} }  \nonumber \\
   =& \sum\limits_{i = 1}^N {\mathbb{E}\left[ {\left. {\left\langle {I_i^t{\mathbf{g}}_i^t,I_i^t{\mathbf{g}}_i^t} \right\rangle } \right|{\mathcal{F}^t}} \right]}  + \sum\limits_{j = 1}^N {\sum\limits_{i \ne j}^N {\mathbb{E}\left[ {\left. {\left\langle {I_i^t{\mathbf{g}}_i^t,I_j^t{\mathbf{g}}_j^t} \right\rangle } \right|{\mathcal{F}^t}} \right]} }  \nonumber \\
   =& \left( {1 - p} \right)\sum\limits_{i = 1}^N {\mathbb{E}\left[ {\left. {\left\langle {{\mathbf{g}}_i^t,{\mathbf{g}}_i^t} \right\rangle } \right|{\mathcal{F}^t}} \right]} \nonumber\\
   &+ {\left( {1 - p} \right)^2}\sum\limits_{j = 1}^N {\sum\limits_{i = 1}^N {\mathbb{E}\left[ {\left. {\left\langle {{\mathbf{g}}_i^t,{\mathbf{g}}_j^t} \right\rangle } \right|{\mathcal{F}^t}} \right]} }  \nonumber \\
   &- {\left( {1 - p} \right)^2}\sum\limits_{i = 1}^N {\mathbb{E}\left[ {\left. {\left\langle {{\mathbf{g}}_i^t,{\mathbf{g}}_i^t} \right\rangle } \right|{\mathcal{F}^t}} \right]}  \nonumber \\
  \mathop  = \limits^{\left\langle 1 \right\rangle }& \left( {1 - p} \right)\sum\limits_{i = 1}^N {\mathbb{E}\left[ {\left. {{{\left\| {{\mathbf{g}}_i^t} \right\|}^2}} \right|{\mathcal{F}^t}} \right]}  + {{{\left\| {\nabla F\left( {{{\boldsymbol{\theta }}^t}} \right)} \right\|}^2}}\nonumber\\
  &- {\left( {1 - p} \right)^2}\sum\limits_{i = 1}^N {\mathbb{E}\left[ {\left. {{{\left\| {{\mathbf{g}}_i^t} \right\|}^2}} \right|{\mathcal{F}^t}} \right]}  \nonumber \\
   = &\left( {p - {p^2}} \right)\sum\limits_{i = 1}^N {\mathbb{E}\left[ {\left. {{{\left\| {{\mathbf{g}}_i^t} \right\|}^2}} \right|{\mathcal{F}^t}} \right]}  + {{{\left\| {\nabla F\left( {{{\boldsymbol{\theta }}^t}} \right)} \right\|}^2}},
\end{align}
where ${\left\langle 1 \right\rangle }$ is obtained by noting that $\left( {1 - p} \right)\sum\limits_{i = 1}^N {{\mathbf{g}}_i^t}  = \nabla F\left( {{{\boldsymbol{\theta }}^t}} \right)$. In (\ref{lemma 1 1}), we can express
\begin{align}
    \label{lemma 1 2}
  &\sum\limits_{i = 1}^N {\mathbb{E}\left[ {\left. {{{\left\| {{\mathbf{g}}_i^t} \right\|}^2}} \right|{\mathcal{F}^t}} \right]}  \nonumber \\
  \mathop  = \limits^{\left\langle 1 \right\rangle } &\sum\limits_{i = 1}^N {\mathbb{E}\left[ {\left. {{{\left\| {\sum\limits_{k \in {\mathcal{S}_i}} {\frac{1}{{{d_k}\left( {1 - p} \right)}}} \nabla {f_k}\left( {{{\boldsymbol{\theta }}^t}} \right)} \right\|}^2}} \right|{\mathcal{F}^t}} \right]}  \nonumber \\
   =& \frac{1}{{{{\left( {1 - p} \right)}^2}}}\sum\limits_{i = 1}^N {\sum\limits_{{k_1} = 1}^M {\sum\limits_{{k_2} = 1}^M {\frac{{s\left( {i,{k_1}} \right)s\left( {i,{k_2}} \right)}}{{{d_{{k_1}}}{d_{{k_2}}}}} } } }  \nonumber \\
  & \times \left\langle {\nabla {f_{{k_1}}}\left( {{{\boldsymbol{\theta }}^t}} \right),\nabla {f_{{k_2}}}\left( {{{\boldsymbol{\theta }}^t}} \right)} \right\rangle \nonumber\\
   =& \frac{1}{{{{\left( {1 - p} \right)}^2}}}\sum\limits_{{k_1} = 1}^M {\sum\limits_{{k_2} = 1}^M {\frac{1}{{{d_{{k_1}}}{d_{{k_2}}}}}\sum\limits_{i = 1}^N {s\left( {i,{k_1}} \right)s\left( {i,{k_2}} \right) } } }  \nonumber \\
   &\times \left\langle {\nabla {f_{{k_1}}}\left( {{{\boldsymbol{\theta }}^t}} \right),\nabla {f_{{k_2}}}\left( {{{\boldsymbol{\theta }}^t}} \right)} \right\rangle \nonumber\\
   =& \frac{1}{{{{\left( {1 - p} \right)}^2}}}\sum\limits_{{k_1}} {\frac{{{{\left\| {\nabla {f_{{k_1}}}\left( {{{\boldsymbol{\theta }}^t}} \right)} \right\|}^2}}}{{d_{{k_1}}^2}}\sum\limits_{i = 1}^N {s\left( {i,{k_1}} \right)} }  \nonumber \\
   &  + \frac{1}{{{{\left( {1 - p} \right)}^2}}}\sum\limits_{{k_1}} {\sum\limits_{{k_2} \ne {k_1}} {\frac{{\left\langle {\nabla {f_{{k_1}}}\left( {{{\boldsymbol{\theta }}^t}} \right),\nabla {f_{{k_2}}}\left( {{{\boldsymbol{\theta }}^t}} \right)} \right\rangle }}{{{d_{{k_1}}}{d_{{k_2}}}}} } }    \nonumber \\
  & \times \sum\limits_{i = 1}^N {s\left( {i,{k_1}} \right)s\left( {i,{k_2}} \right)} \nonumber\\
  \mathop  = \limits^{\left\langle 2 \right\rangle } &\frac{1}{{{{\left( {1 - p} \right)}^2}}}\sum\limits_k {\frac{{{{\left\| {\nabla {f_k}\left( {{{\boldsymbol{\theta }}^t}} \right)} \right\|}^2}}}{{{d_k}}}} \nonumber\\
  &+ \frac{1}{{{{\left( {1 - p} \right)}^2}}}\sum\limits_{{k_1}} {\sum\limits_{{k_2} \ne {k_1}} {\frac{{\left\langle {\nabla {f_{{k_1}}}\left( {{{\boldsymbol{\theta }}^t}} \right),\nabla {f_{{k_2}}}\left( {{{\boldsymbol{\theta }}^t}} \right)} \right\rangle }}{N}} }  \nonumber \\
   =& \frac{1}{{{{\left( {1 - p} \right)}^2}}}\sum\limits_k {\frac{{{{\left\| {\nabla {f_k}\left( {{{\boldsymbol{\theta }}^t}} \right)} \right\|}^2}}}{{{d_k}}}} - \frac{1}{{{{\left( {1 - p} \right)}^2}}}\sum\limits_k {\frac{{{{\left\| {\nabla {f_k}\left( {{{\boldsymbol{\theta }}^t}} \right)} \right\|}^2}}}{N}}  \nonumber \\
   &+ \frac{1}{{{{\left( {1 - p} \right)}^2}}}\sum\limits_{{k_1}} {\sum\limits_{{k_2}} {\frac{{\left\langle {\nabla {f_{{k_1}}}\left( {{{\boldsymbol{\theta }}^t}} \right),\nabla {f_{{k_2}}}\left( {{{\boldsymbol{\theta }}^t}} \right)} \right\rangle }}{N}} }  \nonumber \\
   = & \frac{1}{{{{\left( {1 - p} \right)}^2}}}\sum\limits_{k = 1}^M {{{\left\| {\nabla {f_k}\left( {{{\boldsymbol{\theta }}^t}} \right)} \right\|}^2}\left( {\frac{1}{{{d_k}}} - \frac{1}{N}} \right)} \nonumber\\
   &+ \frac{1}{{{{\left( {1 - p} \right)}^2}}}\frac{{{{\left\| {\nabla F\left( {{{\boldsymbol{\theta }}^t}} \right)} \right\|}^2}}}{N}\nonumber\\
   = & \frac{1}{{{{\left( {1 - p} \right)}^2}}}\sum\limits_{k = 1}^M {{{\left\| {\nabla {f_k}\left( {{{\boldsymbol{\theta }}^t}} \right) - \frac{1}{M}\nabla F\left( {{{\boldsymbol{\theta }}^t}} \right) + \frac{1}{M}\nabla F\left( {{{\boldsymbol{\theta }}^t}} \right)} \right\|}^2}}  \nonumber \\
   &\times \left( {\frac{1}{{{d_k}}} - \frac{1}{N}} \right) \nonumber \\
  & + \frac{1}{{{{\left( {1 - p} \right)}^2}}}\frac{{{{\left\| {\nabla F\left( {{{\boldsymbol{\theta }}^t}} \right)} \right\|}^2}}}{N} \nonumber \\
  \mathop  \leqslant \limits^{\left\langle 3 \right\rangle } & 2\frac{1}{{{{\left( {1 - p} \right)}^2}}}\sum\limits_{k = 1}^M {{{\left\| {\nabla {f_k}\left( {{{\boldsymbol{\theta }}^t}} \right) - \frac{1}{M}\nabla F\left( {{{\boldsymbol{\theta }}^t}} \right)} \right\|}^2}\left( {\frac{1}{{{d_k}}} - \frac{1}{N}} \right)}  \nonumber \\
  & + 2\frac{1}{{{{\left( {1 - p} \right)}^2}}}\sum\limits_{k = 1}^M {{{\left\| {\frac{1}{M}\nabla F\left( {{{\boldsymbol{\theta }}^t}} \right)} \right\|}^2}\left( {\frac{1}{{{d_k}}} - \frac{1}{N}} \right)}  \nonumber \\
  & + \frac{1}{{{{\left( {1 - p} \right)}^2}}}\frac{{{{\left\| {\nabla F\left( {{{\boldsymbol{\theta }}^t}} \right)} \right\|}^2}}}{N} \nonumber \\
  \mathop  \leqslant \limits^{\left\langle 4 \right\rangle } & \frac{{2{\beta ^2}}}{{{{\left( {1 - p} \right)}^2}}}\sum\limits_{k = 1}^M {\left( {\frac{1}{{{d_k}}} - \frac{1}{N}} \right)}  \nonumber \\
 &  + \left[ {\frac{1}{{N{{\left( {1 - p} \right)}^2}}} + \frac{{2\sum\limits_{k = 1}^M {\left( {\frac{1}{{{d_k}}} - \frac{1}{N}} \right)} }}{{{{\left( {1 - p} \right)}^2}{M^2}}}} \right]{\left\| {\nabla F\left( {{{\boldsymbol{\theta }}^t}} \right)} \right\|^2},
\end{align}
where ${\left\langle 1 \right\rangle }$ is derived from (\ref{encoding}), ${\left\langle 2 \right\rangle }$ holds according to the following equations under the pairwise balanced scheme:
\begin{align}
    \label{PBC}
    \sum\limits_{i = 1}^N {s\left( {i,{k_1}} \right)}  = {d_{{k_1}}},\sum\limits_{i = 1}^N {s\left( {i,{k_1}} \right)s\left( {i,{k_2}} \right)}  = \frac{{{d_{{k_1}}}{d_{{k_2}}}}}{N},k_1 \ne k_2,
\end{align}
${\left\langle 3 \right\rangle }$ holds due to the following basic equality:
\begin{align}
    \label{basic ineq1}
    {\left\| {\sum\limits_{i = 1}^n {{{\mathbf{x}}_i}} } \right\|^2} \leqslant n\sum\limits_{i = 1}^n {{{\left\| {{{\mathbf{x}}_i}} \right\|}^2}} ,\forall {{\mathbf{x}}_i} \in {\mathbb{R}^D},
\end{align}
and ${\left\langle 4 \right\rangle }$ is derived from Assumption~\ref{assp bounded heter}. 
Substituting (\ref{lemma 1 2}) into (\ref{lemma 1 1}) yields (\ref{sum ig}) and completes the proof. 

\section{Proof of Lemma \ref{lemma error}}
\label{appendix lemma error}
Let us define 
\begin{align}
    \label{error1}
    {\mathbf{\tilde e}}_i^t = \gamma {\mathbf{g}}_i^t + {\mathbf{e}}_i^t - \mathcal{C}\left( {\gamma {\mathbf{g}}_i^t + {\mathbf{e}}_i^t} \right), i=1,...,N. 
\end{align}
Based on this definition, we have
\begin{align}
    \label{straggler1}
    {\mathbf{e}}_i^{t + 1} = \left\{ {\begin{array}{*{20}{c}}
  {{\mathbf{\tilde e}}_i^t,}&{{\text{if device }}i{\text{ is a non-straggler}}{\text{,}}} \\ 
  {{\mathbf{e}}_i^t,}&{{\text{otherwise}}{\text{.}}} 
\end{array}} \right.
\end{align}
According to (\ref{straggler1}) and the straggler model in (\ref{straggler model}), we have
\begin{align}
    \label{error 2}
    {\left\| {\sum\limits_{i = 1}^N {{\mathbf{e}}_i^{t + 1}} } \right\|^2} = {\left\| {\sum\limits_{i = 1}^N {I_i^t{\mathbf{\tilde e}}_i^t + \sum\limits_{i = 1}^N {\left( {1 - I_i^t} \right){\mathbf{e}}_i^t} } } \right\|^2} \nonumber\\
    = {\left\| {\sum\limits_{i = 1}^N {I_i^t\left( {{\mathbf{\tilde e}}_i^t - {\mathbf{e}}_i^t} \right) + \sum\limits_{i = 1}^N {{\mathbf{e}}_i^t} } } \right\|^2}.
\end{align}
Taking expectations on both sides of (\ref{error 2}) conditioned on iterations $t=0,...,t-1$, we can obtain
\begin{align}
    \label{exp error 2}
  &\mathbb{E}\left[ {\left. {{{\left\| {\sum\limits_{i = 1}^N {{\mathbf{e}}_i^{t + 1}} } \right\|}^2}} \right|{\mathcal{F}^t}} \right] \nonumber \\
   =& \mathbb{E}\left[ {\left. {\left\langle {\sum\limits_{i = 1}^N {I_i^t\left( {{\mathbf{\tilde e}}_i^t - {\mathbf{e}}_i^t} \right)} ,\sum\limits_{i = 1}^N {I_i^t\left( {{\mathbf{\tilde e}}_i^t - {\mathbf{e}}_i^t} \right)} } \right\rangle } \right|{\mathcal{F}^t}} \right]\nonumber\\
   &+ \mathbb{E}\left[ {\left. {\left\langle {\sum\limits_{i = 1}^N {{\mathbf{e}}_i^t} ,\sum\limits_{i = 1}^N {{\mathbf{e}}_i^t} } \right\rangle } \right|{\mathcal{F}^t}} \right] \nonumber \\
  & + 2\mathbb{E}\left[ {\left. {\left\langle {\sum\limits_{i = 1}^N {{\mathbf{e}}_i^t} ,\sum\limits_{i = 1}^N {I_i^t\left( {{\mathbf{\tilde e}}_i^t - {\mathbf{e}}_i^t} \right)} } \right\rangle } \right|{\mathcal{F}^t}} \right]. 
\end{align}
In (\ref{exp error 2}), the first term can be expressed as 
\begin{align}
    \label{error 2 1st}
  &\mathbb{E}\left[ {\left. {\left\langle {\sum\limits_{i = 1}^N {I_i^t\left( {{\mathbf{\tilde e}}_i^t - {\mathbf{e}}_i^t} \right)} ,\sum\limits_{i = 1}^N {I_i^t\left( {{\mathbf{\tilde e}}_i^t - {\mathbf{e}}_i^t} \right)} } \right\rangle } \right|{\mathcal{F}^t}} \right] \nonumber \\
   =& \sum\limits_i {\mathbb{E}\left[ {\left. {\left\langle {I_i^t\left( {{\mathbf{\tilde e}}_i^t - {\mathbf{e}}_i^t} \right),I_i^t\left( {{\mathbf{\tilde e}}_i^t - {\mathbf{e}}_i^t} \right)} \right\rangle } \right|{\mathcal{F}^t}} \right]}  \nonumber \\
   &+ \sum\limits_{i \ne j} {\mathbb{E}\left[ {\left. {\left\langle {I_i^t\left( {{\mathbf{\tilde e}}_i^t - {\mathbf{e}}_i^t} \right),I_j^t\left( {{\mathbf{\tilde e}}_j^t - {\mathbf{e}}_j^t} \right)} \right\rangle } \right|{\mathcal{F}^t}} \right]}  \nonumber \\
  \mathop  = \limits^{\left\langle 1 \right\rangle } &\left( {1 - p} \right)\sum\limits_i {\mathbb{E}\left[ {\left. {{{\left\| {{\mathbf{\tilde e}}_i^t - {\mathbf{e}}_i^t} \right\|}^2}} \right|{\mathcal{F}^t}} \right]}  \nonumber \\
 &  + {\left( {1 - p} \right)^2}\sum\limits_{i \ne j} {\mathbb{E}\left[ {\left. {\left\langle {\left( {{\mathbf{\tilde e}}_i^t - {\mathbf{e}}_i^t} \right),\left( {{\mathbf{\tilde e}}_j^t - {\mathbf{e}}_j^t} \right)} \right\rangle } \right|{\mathcal{F}^t}} \right]}  \nonumber \\
   =& \left( {p - {p^2}} \right)\sum\limits_i {\mathbb{E}\left[ {\left. {{{\left\| {{\mathbf{\tilde e}}_i^t - {\mathbf{e}}_i^t} \right\|}^2}} \right|{\mathcal{F}^t}} \right]}  \nonumber \\
 &  + {\left( {1 - p} \right)^2}\sum\limits_{i,j} {\mathbb{E}\left[ {\left. {\left\langle {\left( {{\mathbf{\tilde e}}_i^t - {\mathbf{e}}_i^t} \right),\left( {{\mathbf{\tilde e}}_j^t - {\mathbf{e}}_j^t} \right)} \right\rangle } \right|{\mathcal{F}^t}} \right]}  \nonumber \\
   \leqslant & 2\left( {p - {p^2}} \right)\sum\limits_i {\mathbb{E}\left[ {\left. {{{\left\| {{\mathbf{\tilde e}}_i^t} \right\|}^2}} \right|{\mathcal{F}^t}} \right]} \nonumber\\
   & + 2\left( {p - {p^2}} \right)\sum\limits_i {{{\left\| {{\mathbf{e}}_i^t} \right\|}^2}} \nonumber\\
  & + {\left( {1 - p} \right)^2}\mathbb{E}\left[ {\left. {{{\left\| {\sum\limits_i {{\mathbf{\tilde e}}_i^t} } \right\|}^2}} \right|{\mathcal{F}^t}} \right] + {\left( {1 - p} \right)^2}{\left\| {\sum\limits_i {{\mathbf{e}}_i^t} } \right\|^2}\nonumber\\
  & - 2{\left( {1 - p} \right)^2}\mathbb{E}\left[ {\left. {\left\langle {\sum\limits_i {{\mathbf{\tilde e}}_i^t} ,\sum\limits_i {{\mathbf{e}}_i^t} } \right\rangle } \right|{\mathcal{F}^t}} \right],
\end{align}
where ${\left\langle 1 \right\rangle }$ is derived from (\ref{straggler model}). 
Substituting (\ref{error 2 1st}) into (\ref{exp error 2}) yields
\begin{align}
    \label{exp error 3}
 & \mathbb{E}\left[ {\left. {{{\left\| {\sum\limits_{i = 1}^N {{\mathbf{e}}_i^{t + 1}} } \right\|}^2}} \right|{\mathcal{F}^t}} \right] \nonumber  \\
   \mathop  \leqslant \limits^{\left\langle 1 \right\rangle }  & 2\left( {p - {p^2}} \right)\sum\limits_i {\mathbb{E}\left[ {\left. {{{\left\| {{\mathbf{\tilde e}}_i^t} \right\|}^2}} \right|{\mathcal{F}^t}} \right]}  + 2\left( {p - {p^2}} \right)\sum\limits_i {{{\left\| {{\mathbf{e}}_i^t} \right\|}^2}}  \nonumber  \\
   & + {\left( {1 - p} \right)^2}\mathbb{E}\left[ {\left. {{{\left\| {\sum\limits_i {{\mathbf{\tilde e}}_i^t} } \right\|}^2}} \right|{\mathcal{F}^t}} \right] + {p^2}{\left\| {\sum\limits_i {{\mathbf{e}}_i^t} } \right\|^2} \nonumber  \\
  & + 2\left( {p - {p^2}} \right)\mathbb{E}\left[ {\left. {\left\langle {\sum\limits_{i = 1}^N {{\mathbf{e}}_i^t} ,\sum\limits_{i = 1}^N {{\mathbf{\tilde e}}_i^t} } \right\rangle } \right|{\mathcal{F}^t}} \right] \nonumber  \\
   \leqslant &2\left( {p - {p^2}} \right)\sum\limits_i {\mathbb{E}\left[ {\left. {{{\left\| {{\mathbf{\tilde e}}_i^t} \right\|}^2}} \right|{\mathcal{F}^t}} \right]}  + 2\left( {p - {p^2}} \right)\sum\limits_i {{{\left\| {{\mathbf{e}}_i^t} \right\|}^2}}  \nonumber  \\
   &+ \left( {1 - p} \right)\mathbb{E}\left[ {\left. {{{\left\| {\sum\limits_i {{\mathbf{\tilde e}}_i^t} } \right\|}^2}} \right|{\mathcal{F}^t}} \right] + p{\left\| {\sum\limits_i {{\mathbf{e}}_i^t} } \right\|^2},
\end{align}
where ${\left\langle 1 \right\rangle }$ is derived from (\ref{straggler model}).
For the first term in (\ref{exp error 3}), we can derive
\begin{align}
    \label{error 3 1st}
 & \sum\limits_i {\mathbb{E}\left[ {\left. {{{\left\| {{\mathbf{\tilde e}}_i^t} \right\|}^2}} \right|{\mathcal{F}^t}} \right]}  \nonumber \\
  \mathop  = \limits^{\left\langle 1 \right\rangle } & \sum\limits_i {\mathbb{E}\left[ {\left. {{{\left\| {\gamma {\mathbf{g}}_i^t + {\mathbf{e}}_i^t - \mathcal{C}\left( {\gamma {\mathbf{g}}_i^t + {\mathbf{e}}_i^t} \right)} \right\|}^2}} \right|{\mathcal{F}^t}} \right]}  \nonumber \\
  \mathop  \leqslant \limits^{\left\langle 2 \right\rangle } &\delta \sum\limits_i {{{\left\| {\gamma {\mathbf{g}}_i^t + {\mathbf{e}}_i^t} \right\|}^2}}  \nonumber \\
  \mathop  \leqslant \limits^{\left\langle 3 \right\rangle } & 2\delta {\gamma ^2}\sum\limits_i {{{\left\| {{\mathbf{g}}_i^t} \right\|}^2}}  + 2\delta \sum\limits_i {{{\left\| {{\mathbf{e}}_i^t} \right\|}^2}},
\end{align}
where ${\left\langle 1 \right\rangle }$ is obtained from (\ref{error1}), ${\left\langle 2 \right\rangle }$ holds due to Assumption~\ref{assp compre}, and ${\left\langle 3 \right\rangle }$ can be derived based on (\ref{basic ineq1}).

For the third term in (\ref{exp error 3}), we have
\begin{align}
    \label{error 3 4th}
 & \mathbb{E}\left[ {\left. {{{\left\| {\sum\limits_i {{\mathbf{\tilde e}}_i^t} } \right\|}^2}} \right|{\mathcal{F}^t}} \right] \nonumber \\
   =& \mathbb{E}\left[ {\left. {{{\left\| {\sum\limits_i {\gamma {\mathbf{g}}_i^t + {\mathbf{e}}_i^t - \mathcal{C}\left( {\gamma {\mathbf{g}}_i^t + {\mathbf{e}}_i^t} \right)} } \right\|}^2}} \right|{\mathcal{F}^t}} \right] \nonumber \\
  \mathop  \leqslant \limits^{\left\langle 1 \right\rangle } &{q_A}\mathbb{E}\left[ {\left. {{{\left\| {\sum\limits_i {\left( {\gamma {\mathbf{g}}_i^t + {\mathbf{e}}_i^t} \right)} } \right\|}^2}} \right|{\mathcal{F}^t}} \right] \nonumber \\
  \mathop  \leqslant \limits^{\left\langle 2 \right\rangle }& {q_A}\left( {1 + \alpha } \right){\gamma ^2}{\left\| {\sum\limits_i {{\mathbf{g}}_i^t} } \right\|^2} + {q_A}\left( {1 + {\alpha ^{ - 1}}} \right){\left\| {\sum\limits_i {{\mathbf{e}}_i^t} } \right\|^2} \nonumber \\
  \mathop  = \limits^{\left\langle 3 \right\rangle }& \frac{{{q_A}{\gamma ^2}\left( {1 + \alpha } \right)}}{{{{\left( {1 - p} \right)}^2}}}{\left\| {\nabla F\left( {{{\boldsymbol{\theta }}^t}} \right)} \right\|^2} +  {{q_A}\left( {1 + {\alpha ^{ - 1}}} \right)} {\left\| {\sum\limits_i {{\mathbf{e}}_i^t} } \right\|^2},\nonumber\\
  & \forall \alpha>0, 
\end{align}
where ${\left\langle 1 \right\rangle }$ holds due to Assumption~\ref{assp agg error}, ${\left\langle 2 \right\rangle }$ is derived from the following basic inequality:
\begin{align}
    \label{basic ineq2}
    {\left\| {{\mathbf{x}} + {\mathbf{y}}} \right\|^2} \leqslant \left( {1 + \alpha } \right){\left\| {\mathbf{x}} \right\|^2} + \left( {1 + {\alpha ^{ - 1}}} \right){\left\| {\mathbf{y}} \right\|^2}, \nonumber\\
    \forall {\mathbf{x}},{\mathbf{y}} \in {\mathbb{R}^D},\forall \alpha>0, 
\end{align}
and ${\left\langle 3 \right\rangle }$ can be derived from (\ref{encoding}).

Substituting (\ref{error 3 1st}) and (\ref{error 3 4th}) into (\ref{exp error 3}), we have
\begin{align}
    \label{error 4}
 & \mathbb{E}\left[ {\left. {{{\left\| {\sum\limits_{i = 1}^N {{\mathbf{e}}_i^{t + 1}} } \right\|}^2}} \right|{\mathcal{F}^t}} \right] \nonumber  \\
   \leqslant& 4\left( {p - {p^2}} \right)\delta {\gamma ^2}\sum\limits_i {{{\left\| {{\mathbf{g}}_i^t} \right\|}^2}}  + \left( {4\delta  + 2} \right)\left( {p - {p^2}} \right)\sum\limits_i {{{\left\| {{\mathbf{e}}_i^t} \right\|}^2}}  \nonumber  \\
  & + \frac{{{q_A}{\gamma ^2}\left( {1 + \alpha } \right)}}{{1 - p}}{\left\| {\nabla F\left( {{{\boldsymbol{\theta }}^t}} \right)} \right\|^2} \nonumber\\
  &+ \left[ {\left( {1 - p} \right){q_A}\left( {1 + {\alpha ^{ - 1}}} \right) + p} \right]{\left\| {\sum\limits_i {{\mathbf{e}}_i^t} } \right\|^2}.
\end{align}
In (\ref{error 4}), we can bound $\sum\limits_i {{{\left\| {{\mathbf{g}}_i^t} \right\|}^2}}$ by 
\begin{align}
    \label{git}
  &  \sum\limits_i {{{\left\| {{\mathbf{g}}_i^t} \right\|}^2}} \nonumber\\
 \leqslant & \frac{{2{\beta ^2}}}{{{{\left( {1 - p} \right)}^2}}}\vartheta  + \left[ {\frac{1}{{N{{\left( {1 - p} \right)}^2}}} + \frac{{2\vartheta }}{{{{\left( {1 - p} \right)}^2}{M^2}}}} \right]{\left\| {\nabla F\left( {{{\boldsymbol{\theta }}^t}} \right)} \right\|^2},
\end{align}
based on (\ref{def vartheta}) and (\ref{lemma 1 2}). 

For the second term in (\ref{error 4}), we can derive
\begin{align}
    \label{2nd error 4}
 & \mathbb{E}\left[ {\left. {{{\left\| {{\mathbf{e}}_i^{t + 1}} \right\|}^2}} \right|{\mathcal{F}^t}} \right] \nonumber  \\
   =& \left( {1 - p} \right)\mathbb{E}\left[ {\left. {{{\left\| {{\mathbf{\tilde e}}_i^t} \right\|}^2}} \right|{\mathcal{F}^t}} \right] + p{\left\| {{\mathbf{e}}_i^t} \right\|^2} \nonumber  \\
   = &\left( {1 - p} \right)\mathbb{E}\left[ {\left. {{{\left\| {\gamma {\mathbf{g}}_i^t + {\mathbf{e}}_i^t - \mathcal{C}\left( {\gamma {\mathbf{g}}_i^t + {\mathbf{e}}_i^t} \right)} \right\|}^2}} \right|{\mathcal{F}^t}} \right] + p{\left\| {{\mathbf{e}}_i^t} \right\|^2} \nonumber  \\
   \leqslant& \left( {1 - p} \right)\delta {\left\| {\gamma {\mathbf{g}}_i^t + {\mathbf{e}}_i^t} \right\|^2} + p{\left\| {{\mathbf{e}}_i^t} \right\|^2} \nonumber  \\
   \leqslant& 2\left( {1 - p} \right)\delta {\gamma ^2}{\left\| {{\mathbf{g}}_i^t} \right\|^2} + \left[ {2\left( {1 - p} \right)\delta  + p} \right]{\left\| {{\mathbf{e}}_i^t} \right\|^2},
\end{align}
by following similar steps as in (\ref{error 3 1st}). Summing over all devices based on (\ref{git}) and (\ref{2nd error 4}), we have
\begin{align}
    \label{sum 2nd}
  &\sum\limits_i {\mathbb{E}\left[ {\left. {{{\left\| {{\mathbf{e}}_i^{t + 1}} \right\|}^2}} \right|{\mathcal{F}^t}} \right]}  \nonumber  \\
   \leqslant& \frac{{4\delta {\gamma ^2}{\beta ^2}\vartheta }}{{1 - p}} + 2\delta {\gamma ^2}\left[ {\frac{1}{{N\left( {1 - p} \right)}} + \frac{{2\vartheta }}{{\left( {1 - p} \right){M^2}}}} \right]{\left\| {\nabla F\left( {{{\boldsymbol{\theta }}^t}} \right)} \right\|^2} \nonumber\\
   &+ \left[ {2\left( {1 - p} \right)\delta  + p} \right]\sum\limits_i {{{\left\| {{\mathbf{e}}_i^t} \right\|}^2}}.
\end{align}
Taking full expectations on both sides of (\ref{sum 2nd}), we can derive that
\begin{align}
    \label{full exp sum 2nd}
  &\mathbb{E}\left[ {\sum\limits_i {{{\left\| {{\mathbf{e}}_i^{t + 1}} \right\|}^2}} } \right] \nonumber  \\
   \leqslant &\frac{{4\delta {\gamma ^2}{\beta ^2}\vartheta }}{{1 - p}} + \frac{{2\delta {\gamma ^2}}}{{1 - p}}\left( {\frac{1}{N} + \frac{{2\vartheta }}{{{M^2}}}} \right)\mathbb{E}\left[ {{{\left\| {\nabla F\left( {{{\boldsymbol{\theta }}^t}} \right)} \right\|}^2}} \right] \nonumber  \\
  & + \left[ {2\left( {1 - p} \right)\delta  + p} \right]\mathbb{E}\left[ {\sum\limits_i {{{\left\| {{\mathbf{e}}_i^t} \right\|}^2}} } \right] \nonumber  \\
   \leqslant & \frac{{2\delta {\gamma ^2}}}{{1 - p}}\sum\limits_{\tau  = 0}^t {\left\{ {2{\beta ^2}\vartheta  + \left( {\frac{1}{N} + \frac{{2\vartheta }}{{{M^2}}}} \right)\mathbb{E}\left[ {{{\left\| {\nabla F\left( {{{\boldsymbol{\theta }}^{t - \tau }}} \right)} \right\|}^2}} \right]} \right\}} \nonumber\\
  & \times {\left[ {2\left( {1 - p} \right)\delta  + p} \right]^\tau }.
\end{align}
From (\ref{git}) and (\ref{full exp sum 2nd}), we can take full expectations on both sides of (\ref{error 4}) and obtain (\ref{full error 4}). 
\begin{figure*}
\begin{align}
    \label{full error 4}
 & \mathbb{E}\left[ {{{\left\| {\sum\limits_{i = 1}^N {{\mathbf{e}}_i^{t + 1}} } \right\|}^2}} \right] \nonumber  \\
 \leqslant & \frac{{8{\beta ^2}p\delta {\gamma ^2}\vartheta }}{{1 - p}} + \frac{{4p\delta {\gamma ^2}}}{{1 - p}}\left( {\frac{1}{N} + \frac{{2\vartheta }}{{{M^2}}}} \right){\left\| {\nabla F\left( {{{\boldsymbol{\theta }}^t}} \right)} \right\|^2} \nonumber \\
  & + \frac{{\left( {4\delta  + 2} \right)2\delta {\gamma ^2}p}}{{2\left( {1 - p} \right)\delta  + p}}\sum\limits_{\tau  = 0}^t {\left\{ {2{\beta ^2}\vartheta  + \left( {\frac{1}{N} + \frac{{2\vartheta }}{{{M^2}}}} \right)\mathbb{E}\left[ {{{\left\| {\nabla F\left( {{{\boldsymbol{\theta }}^\tau }} \right)} \right\|}^2}} \right]} \right\}} {\left[ {2\left( {1 - p} \right)\delta  + p} \right]^{t - \tau }} \nonumber \\
  & + \frac{{{q_A}{\gamma ^2}\left( {1 + \alpha } \right)}}{{1 - p}}\mathbb{E}\left[ {{{\left\| {\nabla F\left( {{{\boldsymbol{\theta }}^t}} \right)} \right\|}^2}} \right] + \left[ {\left( {1 - p} \right){q_A}\left( {1 + {\alpha ^{ - 1}}} \right) + p} \right]\mathbb{E}\left[ {{{\left\| {\sum\limits_i {{\mathbf{e}}_i^t} } \right\|}^2}} \right] \nonumber\\
   \leqslant & \sum\limits_{l = 0}^t {{{\left[ {\left( {1 - p} \right){q_A}\left( {1 + {\alpha ^{ - 1}}} \right) + p} \right]}^{t - l}}} \left\{ \begin{gathered}
  \frac{{8{\beta ^2}p\delta {\gamma ^2}\vartheta }}{{1 - p}} + \frac{{4p\delta {\gamma ^2}}}{{1 - p}}\left( {\frac{1}{N} + \frac{{2\vartheta }}{{{M^2}}}} \right)\mathbb{E}\left[ {{{\left\| {\nabla F\left( {{{\boldsymbol{\theta }}^l}} \right)} \right\|}^2}} \right] \nonumber \\
   + \frac{{\left( {4\delta  + 2} \right)2\delta {\gamma ^2}p}}{{2\left( {1 - p} \right)\delta  + p}}\sum\limits_{\tau  = 0}^l {\left\{ {2{\beta ^2}\vartheta  + \left( {\frac{1}{N} + \frac{{2\vartheta }}{{{M^2}}}} \right)\mathbb{E}\left[ {{{\left\| {\nabla F\left( {{{\boldsymbol{\theta }}^\tau }} \right)} \right\|}^2}} \right]} \right\}} {\left[ {2\left( {1 - p} \right)\delta  + p} \right]^{l - \tau }} \nonumber \\
   + \frac{{{q_A}{\gamma ^2}\left( {1 + \alpha } \right)}}{{1 - p}}\mathbb{E}\left[ {{{\left\| {\nabla F\left( {{{\boldsymbol{\theta }}^l}} \right)} \right\|}^2}} \right] \nonumber \\ 
\end{gathered}  \right\}\nonumber\\
   = & \sum\limits_{l = 0}^t {{{\left[ {\left( {1 - p} \right){q_A}\left( {1 + {\alpha ^{ - 1}}} \right) + p} \right]}^{t - l}}} \left\{ {\frac{{8{\beta ^2}p\delta {\gamma ^2}\vartheta }}{{1 - p}} + \frac{{\left( {4\delta  + 2} \right)4\delta {\gamma ^2}p{\beta ^2}\vartheta }}{{\left[ {2\left( {1 - p} \right)\delta  + p} \right]}}\frac{{1 - {{\left[ {2\left( {1 - p} \right)\delta  + p} \right]}^{l + 1}}}}{{1 - \left[ {2\left( {1 - p} \right)\delta  + p} \right]}}} \right\} \nonumber \\
   &+ \left[ {\frac{{4p\delta {\gamma ^2}}}{{1 - p}}\left( {\frac{1}{N} + \frac{{2\vartheta }}{{{M^2}}}} \right) + \frac{{{q_A}{\gamma ^2}\left( {1 + \alpha } \right)}}{{1 - p}}} \right]\sum\limits_{l = 0}^t {{{\left[ {\left( {1 - p} \right){q_A}\left( {1 + {\alpha ^{ - 1}}} \right) + p} \right]}^{t - l}}} \mathbb{E}\left[ {{{\left\| {\nabla F\left( {{{\boldsymbol{\theta }}^l}} \right)} \right\|}^2}} \right] \nonumber \\
  &  + \frac{{\frac{{\left( {4\delta  + 2} \right)2\delta {\gamma ^2}p}}{{2\left( {1 - p} \right)\delta  + p}}\left( {\frac{1}{N} + \frac{{2\vartheta }}{{{M^2}}}} \right){{\left[ {\left( {1 - p} \right){q_A}\left( {1 + {\alpha ^{ - 1}}} \right) + p} \right]}^t}}}{{1 - \frac{{\left[ {2\left( {1 - p} \right)\delta  + p} \right]}}{{\left[ {\left( {1 - p} \right){q_A}\left( {1 + {\alpha ^{ - 1}}} \right) + p} \right]}}}}\sum\limits_{\tau  = 0}^t {\frac{{\mathbb{E}\left[ {{{\left\| {\nabla F\left( {{{\boldsymbol{\theta }}^\tau }} \right)} \right\|}^2}} \right]\left\{ {1 - {{\left\{ {\frac{{\left[ {2\left( {1 - p} \right)\delta  + p} \right]}}{{\left[ {\left( {1 - p} \right){q_A}\left( {1 + {\alpha ^{ - 1}}} \right) + p} \right]}}} \right\}}^{t - \tau  + 1}}} \right\}}}{{{{\left[ {\left( {1 - p} \right){q_A}\left( {1 + {\alpha ^{ - 1}}} \right) + p} \right]}^\tau }}}} \nonumber\\
   \leqslant & \frac{1}{{1 - \left[ {\left( {1 - p} \right){q_A}\left( {1 + {\alpha ^{ - 1}}} \right) + p} \right]}}\left\{ {\frac{{8{\beta ^2}p\delta {\gamma ^2}\vartheta }}{{1 - p}} + \frac{{\left( {4\delta  + 2} \right)4\delta {\gamma ^2}p{\beta ^2}\vartheta }}{{2\left( {1 - p} \right)\delta  + p}}\frac{1}{{1 - \left[ {2\left( {1 - p} \right)\delta  + p} \right]}}} \right\} \nonumber \\
  & + \left\{ {\frac{{4p\delta {\gamma ^2}}}{{1 - p}}\left( {\frac{1}{N} + \frac{{2\vartheta }}{{{M^2}}}} \right) + \frac{{{q_A}{\gamma ^2}\left( {1 + \alpha } \right)}}{{1 - p}} + \frac{{\frac{{\left( {4\delta  + 2} \right)2\delta {\gamma ^2}p}}{{2\left( {1 - p} \right)\delta  + p}}\left( {\frac{1}{N} + \frac{{2\vartheta }}{{{M^2}}}} \right)}}{{1 - \frac{{2\left( {1 - p} \right)\delta  + p}}{{\left( {1 - p} \right){q_A}\left( {1 + {\alpha ^{ - 1}}} \right) + p}}}}} \right\} \nonumber \\
   & \times {\left[ {\left( {1 - p} \right){q_A}\left( {1 + {\alpha ^{ - 1}}} \right) + p} \right]^t}\sum\limits_{\tau  = 0}^t {{{\left[ {\left( {1 - p} \right){q_A}\left( {1 + {\alpha ^{ - 1}}} \right) + p} \right]}^{ - \tau }}} \mathbb{E}\left[ {{{\left\| {\nabla F\left( {{{\boldsymbol{\theta }}^\tau }} \right)} \right\|}^2}} \right].
\end{align}
\end{figure*}
Based on (\ref{full error 4}), let us sum over $T+1$ iterations to obtain (\ref{sum full error 4}). 
\begin{figure*}
\begin{align}
    \label{sum full error 4}
  &\sum\limits_{t = 0}^T {\mathbb{E}\left[ {{{\left\| {\sum\limits_{i = 1}^N {{\mathbf{e}}_i^{t + 1}} } \right\|}^2}} \right]}  \nonumber \\
   \leqslant & \frac{{T + 1}}{{1 - \left[ {\left( {1 - p} \right){q_A}\left( {1 + {\alpha ^{ - 1}}} \right) + p} \right]}}\left\{ {\frac{{8{\beta ^2}p\delta {\gamma ^2}\vartheta }}{{1 - p}} + \frac{{\left( {4\delta  + 2} \right)4\delta {\gamma ^2}p{\beta ^2}\vartheta }}{{2\left( {1 - p} \right)\delta  + p}}\frac{1}{{1 - \left[ {2\left( {1 - p} \right)\delta  + p} \right]}}} \right\} \nonumber \\
   + & \left\{ {\frac{{4p\delta {\gamma ^2}}}{{1 - p}}\left( {\frac{1}{N} + \frac{{2\vartheta }}{{{M^2}}}} \right) + \frac{{{q_A}{\gamma ^2}\left( {1 + \alpha } \right)}}{{1 - p}} + \frac{{\frac{{\left( {4\delta  + 2} \right)2\delta {\gamma ^2}p}}{{2\left( {1 - p} \right)\delta  + p}}\left( {\frac{1}{N} + \frac{{2\vartheta }}{{{M^2}}}} \right)}}{{1 - \frac{{2\left( {1 - p} \right)\delta  + p}}{{\left( {1 - p} \right){q_A}\left( {1 + {\alpha ^{ - 1}}} \right) + p}}}}} \right\}\sum\limits_{\tau  = 0}^T {\mathbb{E}\left[ {{{\left\| {\nabla F\left( {{{\boldsymbol{\theta }}^\tau }} \right)} \right\|}^2}} \right]\sum\limits_{t = \tau }^T {{{\left[ {\left( {1 - p} \right){q_A}\left( {1 + {\alpha ^{ - 1}}} \right) + p} \right]}^{t - \tau }}} }  \nonumber \\
   = & \frac{{T + 1}}{{1 - \left[ {\left( {1 - p} \right){q_A}\left( {1 + {\alpha ^{ - 1}}} \right) + p} \right]}}\left\{ {\frac{{8{\beta ^2}p\delta {\gamma ^2}\vartheta }}{{1 - p}} + \frac{{\left( {4\delta  + 2} \right)4\delta {\gamma ^2}p{\beta ^2}\vartheta }}{{2\left( {1 - p} \right)\delta  + p}}\frac{1}{{1 - \left[ {2\left( {1 - p} \right)\delta  + p} \right]}}} \right\} \nonumber \\
    + & \left\{ {\frac{{4p\delta {\gamma ^2}}}{{1 - p}}\left( {\frac{1}{N} + \frac{{2\vartheta }}{{{M^2}}}} \right) + \frac{{{q_A}{\gamma ^2}\left( {1 + \alpha } \right)}}{{1 - p}} + \frac{{\frac{{\left( {4\delta  + 2} \right)2\delta {\gamma ^2}p}}{{2\left( {1 - p} \right)\delta  + p}}\left( {\frac{1}{N} + \frac{{2\vartheta }}{{{M^2}}}} \right)}}{{1 - \frac{{2\left( {1 - p} \right)\delta  + p}}{{\left( {1 - p} \right){q_A}\left( {1 + {\alpha ^{ - 1}}} \right) + p}}}}} \right\}\sum\limits_{\tau  = 0}^T {\mathbb{E}\left[ {{{\left\| {\nabla F\left( {{{\boldsymbol{\theta }}^\tau }} \right)} \right\|}^2}} \right]\frac{{1 - {{\left[ {\left( {1 - p} \right){q_A}\left( {1 + {\alpha ^{ - 1}}} \right) + p} \right]}^{T - \tau  + 1}}}}{{1 - \left[ {\left( {1 - p} \right){q_A}\left( {1 + {\alpha ^{ - 1}}} \right) + p} \right]}}}  \nonumber \\
   \leqslant & \frac{{T + 1}}{{1 - \left[ {\left( {1 - p} \right){q_A}\left( {1 + {\alpha ^{ - 1}}} \right) + p} \right]}}\left\{ {\frac{{8{\beta ^2}p\delta {\gamma ^2}\vartheta }}{{1 - p}} + \frac{{\left( {4\delta  + 2} \right)4\delta {\gamma ^2}p{\beta ^2}\vartheta }}{{2\left( {1 - p} \right)\delta  + p}}\frac{1}{{1 - \left[ {2\left( {1 - p} \right)\delta  + p} \right]}}} \right\} \nonumber \\
   + & \left\{ {\frac{{4p\delta {\gamma ^2}}}{{1 - p}}\left( {\frac{1}{N} + \frac{{2\vartheta }}{{{M^2}}}} \right) + \frac{{{q_A}{\gamma ^2}\left( {1 + \alpha } \right)}}{{1 - p}} + \frac{{\frac{{\left( {4\delta  + 2} \right)2\delta {\gamma ^2}p}}{{2\left( {1 - p} \right)\delta  + p}}\left( {\frac{1}{N} + \frac{{2\vartheta }}{{{M^2}}}} \right)}}{{1 - \frac{{\left[ {2\left( {1 - p} \right)\delta  + p} \right]}}{{\left( {1 - p} \right){q_A}\left( {1 + {\alpha ^{ - 1}}} \right) + p}}}}} \right\}\frac{1}{{1 - \left[ {\left( {1 - p} \right){q_A}\left( {1 + {\alpha ^{ - 1}}} \right) + p} \right]}}\sum\limits_{\tau  = 0}^T {\mathbb{E}\left[ {{{\left\| {\nabla F\left( {{{\boldsymbol{\theta }}^\tau }} \right)} \right\|}^2}} \right]}.
\end{align}
\end{figure*}
By setting $\alpha  = \frac{{2{q_A}}}{{2\delta  + 1 - 2{q_A}}}$ in (\ref{sum full error 4}), under the conditions $\delta  < 0.5$ and ${q_A} < \frac{{2\delta  + 1}}{2}$, we can derive (\ref{bound error}) easily. 
\section{Proof of Theorem~\ref{convergence performance}}
\label{proof th}
    Let us consider the virtual sequence $\left\{ {{{\mathbf{x}}^t},t = 0,...,T} \right\}$, where 
    \begin{align}
        \label{virtual x}
        {{\mathbf{x}}^t} \triangleq {{\boldsymbol{\theta}}^t} - \sum\limits_{i = 1}^N {{\mathbf{e}}_i^t}.
    \end{align}
From (\ref{virtual x}), we can derive 
    \begin{align}
        \label{virtual}
 & {{\mathbf{x}}^{t + 1}} = {{\boldsymbol{\theta }}^{t + 1}} - \sum\limits_{i = 1}^N {{\mathbf{e}}_i^{t + 1}} \nonumber\\
  \mathop  = \limits^{\left\langle 1 \right\rangle }  & {{\boldsymbol{\theta }}^t} - {{{\mathbf{\hat g}}}^t} - \sum\limits_{i = 1}^N {I_i^t{\mathbf{e}}_i^{t + 1}}  - \sum\limits_{i = 1}^N {\left( {1 - I_i^t} \right){\mathbf{e}}_i^{t + 1}}  \nonumber \\
   \mathop  = \limits^{\left\langle 2 \right\rangle }& {{\boldsymbol{\theta }}^t} - \sum\limits_{i = 1}^N {I_i^t\mathcal{C}\left( {\gamma {\mathbf{g}}_i^t + {\mathbf{e}}_i^t} \right)}  - \sum\limits_{i = 1}^N {I_i^t{\mathbf{e}}_i^{t + 1}}  - \sum\limits_{i = 1}^N {\left( {1 - I_i^t} \right){\mathbf{e}}_i^t}  \nonumber \\
   =& {{\boldsymbol{\theta }}^t} - \sum\limits_{i = 1}^N {I_i^t\left[ {\mathcal{C}\left( {\gamma {\mathbf{g}}_i^t + {\mathbf{e}}_i^t} \right) + {\mathbf{e}}_i^{t + 1}} \right]}  - \sum\limits_{i = 1}^N {\left( {1 - I_i^t} \right){\mathbf{e}}_i^t}  \nonumber \\
   \mathop  = \limits^{\left\langle 3 \right\rangle }& {{\boldsymbol{\theta }}^t} - \sum\limits_{i = 1}^N {I_i^t\left( {\gamma {\mathbf{g}}_i^t + {\mathbf{e}}_i^t} \right)}  - \sum\limits_{i = 1}^N {\left( {1 - I_i^t} \right){\mathbf{e}}_i^t}  \nonumber \\
   =& {{\boldsymbol{\theta }}^t} - \sum\limits_{i = 1}^N {{\mathbf{e}}_i^t}  - \gamma \sum\limits_{i = 1}^N {I_i^t{\mathbf{g}}_i^t}  \mathop  = \limits^{\left\langle 4 \right\rangle } {{\mathbf{x}}^t} - \gamma \sum\limits_{i = 1}^N {I_i^t{\mathbf{g}}_i^t},
    \end{align}
where \(\left\langle 1 \right\rangle \) is obtained by substituting (\ref{update model}) into (\ref{virtual}), \(\left\langle 2 \right\rangle \) is derived from (\ref{compress}) and (\ref{server agg}), \(\left\langle 3 \right\rangle \) holds due to (\ref{error update}), and \(\left\langle 4 \right\rangle \) is derived by (\ref{virtual x}).

From Assumption~\ref{assp smooth} and (\ref{virtual}), we have
\begin{align}
    \label{smooth_vir}
  &F\left( {{{\mathbf{x}}^{t + 1}}} \right) \nonumber\\
  \leqslant& F\left( {{{\mathbf{x}}^t}} \right) + \left\langle {\nabla F\left( {{{\mathbf{x}}^t}} \right),{{\mathbf{x}}^{t + 1}} - {{\mathbf{x}}^t}} \right\rangle  + \frac{L}{2}{\left\| {{{\mathbf{x}}^{t + 1}} - {{\mathbf{x}}^t}} \right\|^2} \nonumber \\
   =& F\left( {{{\mathbf{x}}^t}} \right) - \left\langle {\nabla F\left( {{{\mathbf{x}}^t}} \right),\gamma \sum\limits_{i = 1}^N {I_i^t{\mathbf{g}}_i^t} } \right\rangle  + \frac{L}{2}{\left\| {\gamma \sum\limits_{i = 1}^N {I_i^t{\mathbf{g}}_i^t} } \right\|^2}.
\end{align}
Taking expectations on both sides of (\ref{smooth_vir}) conditioned on iterations 
$0,...,t-1$ yields
\begin{align}
    \label{exp smooth}
 & \mathbb{E}\left[ {\left. {F\left( {{{\mathbf{x}}^{t + 1}}} \right)} \right|{\mathcal{F}^t}} \right] - F\left( {{{\mathbf{x}}^t}} \right) \nonumber \\
   \leqslant&  - \gamma \mathbb{E}\left[ {\left. {\left\langle {\nabla F\left( {{{\mathbf{x}}^t}} \right),\sum\limits_{i = 1}^N {I_i^t{\mathbf{g}}_i^t} } \right\rangle } \right|{\mathcal{F}^t}} \right]\nonumber\\
   &+ \frac{{L{\gamma ^2}}}{2}\mathbb{E}\left[ {\left. {{{\left\| {\sum\limits_{i = 1}^N {I_i^t{\mathbf{g}}_i^t} } \right\|}^2}} \right|{\mathcal{F}^t}} \right] \nonumber \\
   = & - \gamma \mathbb{E}\left[ {\left. {\left\langle {\nabla F\left( {{{\boldsymbol{\theta }}^t}} \right),\sum\limits_{i = 1}^N {I_i^t{\mathbf{g}}_i^t} } \right\rangle } \right|{\mathcal{F}^t}} \right]\nonumber\\
   &+ \frac{{L{\gamma ^2}}}{2}\mathbb{E}\left[ {\left. {{{\left\| {\sum\limits_{i = 1}^N {I_i^t{\mathbf{g}}_i^t} } \right\|}^2}} \right|{\mathcal{F}^t}} \right] \nonumber \\
   &+ \gamma \mathbb{E}\left[ {\left. {\left\langle {\nabla F\left( {{{\boldsymbol{\theta }}^t}} \right) - \nabla F\left( {{{\mathbf{x}}^t}} \right),\sum\limits_{i = 1}^N {I_i^t{\mathbf{g}}_i^t} } \right\rangle } \right|{\mathcal{F}^t}} \right].
\end{align}

For the first term in (\ref{exp smooth}), we can derive 
\begin{align}
    \label{1st term}
   &- \gamma \mathbb{E}\left[ {\left. {\left\langle {\nabla F\left( {{{\boldsymbol{\theta }}^t}} \right),\sum\limits_{i = 1}^N {I_i^t{\mathbf{g}}_i^t} } \right\rangle } \right|{\mathcal{F}^t}} \right] \nonumber \\
   \mathop  = \limits^{\left\langle 1 \right\rangle } &  - \gamma \left( {1 - p} \right)\mathbb{E}\left[ {\left. {\left\langle {\nabla F\left( {{{\boldsymbol{\theta }}^t}} \right),\sum\limits_{i = 1}^N {\sum\limits_{k \in {\mathcal{S}_i}} {\frac{{\nabla {f_k}\left( {{{\boldsymbol{\theta }}^t}} \right)}}{{{d_k}\left( {1 - p} \right)}}} } } \right\rangle } \right|{\mathcal{F}^t}} \right] \nonumber \\
   = & - \gamma \mathbb{E}\left[ {\left. {\left\langle {\nabla F\left( {{{\boldsymbol{\theta }}^t}} \right),\sum\limits_{k = 1}^M {\nabla {f_k}\left( {{{\boldsymbol{\theta }}^t}} \right)} } \right\rangle } \right|{\mathcal{F}^t}} \right] \nonumber \\
   =  &- \gamma \mathbb{E}\left[ {\left. {\left\langle {\nabla F\left( {{{\boldsymbol{\theta }}^t}} \right),\nabla F\left( {{{\boldsymbol{\theta }}^t}} \right)} \right\rangle } \right|{\mathcal{F}^t}} \right] \nonumber \\
   = &  - \gamma {\left\| {\nabla F\left( {{{\boldsymbol{\theta }}^t}} \right)} \right\|^2},
\end{align}
where \(\left\langle 1 \right\rangle \) holds due to (\ref{straggler model}).

For the third term in (\ref{exp smooth}), we have
\begin{align}
    \label{3rd term}
  &\gamma \mathbb{E}\left[ {\left. {\left\langle {\nabla F\left( {{{\boldsymbol{\theta }}^t}} \right) - \nabla F\left( {{{\mathbf{x}}^t}} \right),\sum\limits_{i = 1}^N {I_i^t{\mathbf{g}}_i^t} } \right\rangle } \right|{\mathcal{F}^t}} \right] \nonumber \\
   \mathop  \leqslant \limits^{\left\langle 1 \right\rangle } & \frac{{\gamma \rho }}{2}{\left\| {\nabla F\left( {{{\boldsymbol{\theta }}^t}} \right) - \nabla F\left( {{{\mathbf{x}}^t}} \right)} \right\|^2} + \frac{\gamma }{{2\rho }}\mathbb{E}\left[ {\left. {{{\left\| {\sum\limits_{i = 1}^N {I_i^t{\mathbf{g}}_i^t} } \right\|}^2}} \right|{\mathcal{F}^t}} \right] \nonumber \\
   \mathop  \leqslant \limits^{\left\langle 2 \right\rangle }& \frac{{\gamma \rho {L^2}}}{2}{\left\| {{{\boldsymbol{\theta }}^t} - {{\mathbf{x}}^t}} \right\|^2} + \frac{\gamma }{{2\rho }}\mathbb{E}\left[ {\left. {{{\left\| {\sum\limits_{i = 1}^N {I_i^t{\mathbf{g}}_i^t} } \right\|}^2}} \right|{\mathcal{F}^t}} \right] \nonumber \\
   \mathop  = \limits^{\left\langle 3 \right\rangle }& \frac{{\gamma \rho {L^2}}}{2}{\left\| {\sum\limits_{i = 1}^N {{\mathbf{e}}_i^t} } \right\|^2} + \frac{\gamma }{{2\rho }}\mathbb{E}\left[ {\left. {{{\left\| {\sum\limits_{i = 1}^N {I_i^t{\mathbf{g}}_i^t} } \right\|}^2}} \right|{\mathcal{F}^t}} \right], \forall \rho>0, 
\end{align}
where \(\left\langle 1 \right\rangle \) holds due to Young's inequality, \(\left\langle 2 \right\rangle \) can be derived based on Assumption~\ref{assp smooth}, and \(\left\langle 3 \right\rangle \) is obtained by substituting (\ref{virtual x}) into (\ref{3rd term}).

From (\ref{1st term}) and (\ref{3rd term}), we can rewrite (\ref{exp smooth}) as
\begin{align}
    \label{exp smooth2}
  &\mathbb{E}\left[ {\left. {F\left( {{{\mathbf{x}}^{t + 1}}} \right)} \right|{\mathcal{F}^t}} \right] - F\left( {{{\mathbf{x}}^t}} \right) \nonumber \\
   \leqslant&  - \gamma {\left\| {\nabla F\left( {{{\boldsymbol{\theta }}^t}} \right)} \right\|^2} + \left( {\frac{{L{\gamma ^2}}}{2} + \frac{\gamma }{{2\rho }}} \right)\mathbb{E}\left[ {\left. {{{\left\| {\sum\limits_{i = 1}^N {I_i^t{\mathbf{g}}_i^t} } \right\|}^2}} \right|{\mathcal{F}^t}} \right] \nonumber\\
 &  + \frac{{\gamma \rho {L^2}}}{2}{\left\| {\sum\limits_{i = 1}^N {{\mathbf{e}}_i^t} } \right\|^2},\forall \rho  > 0. 
\end{align}
Substituting (\ref{sum ig}) in Lemma~\ref{lemma 1} into (\ref{exp smooth2}), we have
\begin{align}
    \label{exp smooth 3}
 & \mathbb{E}\left[ {\left. {F\left( {{{\mathbf{x}}^{t + 1}}} \right)} \right|{\mathcal{F}^t}} \right] - F\left( {{{\mathbf{x}}^t}} \right) \nonumber \\
   \leqslant& \left\{ {\left( {\frac{{L{\gamma ^2}}}{2} + \frac{\gamma }{{2\rho }}} \right)\left[ {\frac{p}{{\left( {1 - p} \right)N}} + 1 + \frac{{2p\vartheta }}{{\left( {1 - p} \right){M^2}}}} \right] - \gamma } \right\} \nonumber\\
 &\times {\left\| {\nabla F\left( {{{\boldsymbol{\theta }}^t}} \right)} \right\|^2} \nonumber \\
   &+ \frac{{\gamma \rho {L^2}}}{2}{\left\| {\sum\limits_{i = 1}^N {{\mathbf{e}}_i^t} } \right\|^2} + \left( {\frac{{L{\gamma ^2}}}{2} + \frac{\gamma }{{2\rho }}} \right)\frac{{2p{\beta ^2}\vartheta }}{{1 - p}}.
\end{align}
Taking full expectations on both sides of (\ref{exp smooth 3}), we have
\begin{align}
    \label{full exp}
  &\mathbb{E}\left[ {F\left( {{{\mathbf{x}}^{t + 1}}} \right)} \right] - \mathbb{E}\left[ {F\left( {{{\mathbf{x}}^t}} \right)} \right] \nonumber \\
   \leqslant& \left\{ {\left( {\frac{{L{\gamma ^2}}}{2} + \frac{\gamma }{{2\rho }}} \right)\left[ {\frac{p}{{\left( {1 - p} \right)N}} + 1 + \frac{{2p\vartheta }}{{\left( {1 - p} \right){M^2}}}} \right] - \gamma } \right\} \nonumber \\
  & \times \mathbb{E}\left[ {{{\left\| {\nabla F\left( {{{\boldsymbol{\theta }}^t}} \right)} \right\|}^2}} \right] \nonumber \\
  & + \frac{{\gamma \rho {L^2}}}{2}\mathbb{E}\left[ {{{\left\| {\sum\limits_{i = 1}^N {{\mathbf{e}}_i^t} } \right\|}^2}} \right] + \left( {\frac{{L{\gamma ^2}}}{2} + \frac{\gamma }{{2\rho }}} \right)\frac{{2p{\beta ^2}\vartheta }}{{1 - p}}.
\end{align}
Taking the average over $T+1$ iterations based on (\ref{full exp}) yields
\begin{align}
    \label{average exp}
 & \frac{{\mathbb{E}\left[ {F\left( {{{\mathbf{x}}^{T + 1}}} \right)} \right] - \mathbb{E}\left[ {F\left( {{{\mathbf{x}}^0}} \right)} \right]}}{{T + 1}} \nonumber \\
   \leqslant& \left\{ {\left( {\frac{{L{\gamma ^2}}}{2} + \frac{\gamma }{{2\rho }}} \right)\left[ {\frac{p}{{\left( {1 - p} \right)N}} + 1 + \frac{{2p\vartheta }}{{\left( {1 - p} \right){M^2}}}} \right] - \gamma } \right\} \nonumber \\
  & \times \frac{1}{{T + 1}}\sum\limits_{t = 0}^T {\mathbb{E}\left[ {{{\left\| {\nabla F\left( {{{\boldsymbol{\theta }}^t}} \right)} \right\|}^2}} \right]}  \nonumber \\
  & + \frac{{\gamma \rho {L^2}}}{2}\frac{1}{{T + 1}}\sum\limits_{t = 0}^{T + 1} {\mathbb{E}\left[ {{{\left\| {\sum\limits_{i = 1}^N {{\mathbf{e}}_i^t} } \right\|}^2}} \right]}  + \left( {\frac{{L{\gamma ^2}}}{2} + \frac{\gamma }{{2\rho }}} \right)\frac{{2p{\beta ^2}\vartheta }}{{1 - p}}.
\end{align}
Substituting (\ref{bound error}) in Lemma~\ref{lemma error} into (\ref{average exp}) and rearranging the terms, we have (\ref{average1}).
\begin{figure*}
\begin{align}
    \label{average1}
 & \left\{ { - \left( {\frac{{L{\gamma ^2}}}{2} + \frac{\gamma }{{2\rho }}} \right)\left[ {\frac{p}{{\left( {1 - p} \right)N}} + 1 + \frac{{2p\vartheta }}{{\left( {1 - p} \right){M^2}}}} \right] + \gamma  - \frac{{{\gamma ^3}\rho {L^2}}}{2}{\xi _2}} \right\}\frac{1}{{T + 1}}\sum\limits_{t = 0}^T {\mathbb{E}\left[ {{{\left\| {\nabla F\left( {{{\boldsymbol{\theta }}^t}} \right)} \right\|}^2}} \right]}  \nonumber \\
   \leqslant& \frac{{{\gamma ^3}\rho {L^2}}}{2}{\xi _1} + \left( {\frac{{L{\gamma ^2}}}{2} + \frac{\gamma }{{2\rho }}} \right)\frac{{2p{\beta ^2}\vartheta }}{{1 - p}} + \frac{{ - \mathbb{E}\left[ {F\left( {{{\mathbf{x}}^{T + 1}}} \right)} \right] + \mathbb{E}\left[ {F\left( {{{\mathbf{x}}^0}} \right)} \right]}}{{T + 1}}.
\end{align}
\end{figure*}
By setting $\rho  = \frac{1}{{\gamma {\rho _0}}}$ in (\ref{average1}) and using Assumption~\ref{assp lower bound}, we can further bound (\ref{average1}) as
\begin{align}
    \label{average 2}
  &\left( {\gamma  - {\tilde \varepsilon _0}{\gamma ^2}} \right)\frac{1}{{T + 1}}\sum\limits_{t = 0}^T {\mathbb{E}\left[ {{{\left\| {\nabla F\left( {{{\boldsymbol{\theta }}^t}} \right)} \right\|}^2}} \right]}  \nonumber \\
   \leqslant & {\tilde \varepsilon _1}{\gamma ^2} + \frac{{\mathbb{E}\left[ {F\left( {{{\mathbf{x}}^0}} \right)} \right] - {F^*}}}{{T + 1}}, \forall \rho_0>0, 
\end{align}
where
\begin{align}
    \label{ep0}
{\tilde \varepsilon _0} \triangleq \frac{{L + {\rho _0}}}{2}\left[ {\frac{p}{{\left( {1 - p} \right)N}} + 1 + \frac{{2p\vartheta }}{{\left( {1 - p} \right){M^2}}}} \right] + \frac{{{L^2}{\xi _2}}}{{2{\rho _0}}},
\end{align}
and
\begin{align}
    \label{ep1}
    {\tilde \varepsilon _1} \triangleq \frac{{{L^2}}}{{2{\rho _0}}}{\xi _1} + \left( {\frac{L}{2} + \frac{{{\rho _0}}}{2}} \right)\frac{{2p{\beta ^2}\vartheta }}{{1 - p}}.
\end{align}
If we set $\gamma  = \frac{\phi }{{\sqrt {T + 1} }}$, $\phi>0$, for $T > {\left( {{\tilde \varepsilon _0}\phi } \right)^2} - 1$, it holds from (\ref{average 2}) that 
\begin{align}
    \label{average3}
    & \frac{1}{{T + 1}}\sum\limits_{t = 0}^T {\mathbb{E}\left[ {{{\left\| {\nabla F\left( {{{\boldsymbol{\theta }}^t}} \right)} \right\|}^2}} \right]}  \nonumber \\
   \leqslant & \frac{{{\tilde \varepsilon _1}\phi }}{{\sqrt {T + 1}  - {\tilde \varepsilon _0}\phi }} + \frac{{F\left( {{{\mathbf{x}}^0}} \right) - {F^*}}}{{\phi \sqrt {T + 1}  - {\tilde \varepsilon _0}{\phi ^2}}}\nonumber\\
   = & \frac{{{\tilde \varepsilon _1}\phi }}{{\sqrt {T + 1}  - {\tilde \varepsilon _0}\phi }} + \frac{{F\left( {{{\boldsymbol{\theta}}^0}} \right) - {F^*}}}{{\phi \sqrt {T + 1}  - {\tilde \varepsilon _0}{\phi ^2}}}.
\end{align} 
Note that
\begin{align}
    \label{rho1 op}
    {{\tilde \varepsilon }_1} \geqslant {\varepsilon _1} \triangleq \sqrt {\frac{{2{L^2}{\xi _1}p{\beta ^2}\vartheta }}{{1 - p}}}  + \frac{{Lp{\beta ^2}\vartheta }}{{1 - p}},
\end{align}
which can be attained by setting
\begin{align}
    \label{rho0value}
    {\rho _0} = \sqrt {\frac{{{L^2}{\xi _1}\left( {1 - p} \right)}}{{2p{\beta ^2}\vartheta }}}.
\end{align}
Substituting (\ref{rho0value}) into (\ref{ep0}), based on (\ref{average3}) and (\ref{rho1 op}), we can easily derive Theorem~\ref{convergence performance}.

\end{document}